\documentclass[10pt,conference]{IEEEtran}
\usepackage{fixltx2e}
\usepackage{cite}
\usepackage{url}
\usepackage{mathrsfs}
\usepackage{color}
\usepackage{float}
\usepackage[caption=false]{subfig}
\usepackage{balance}
\ifCLASSINFOpdf
   \usepackage[pdftex]{graphicx}
   \graphicspath{{Figs/}}
   \DeclareGraphicsExtensions{.pdf,.jpeg,.png}
\else
\fi

\usepackage[cmex10]{amsmath}
\usepackage{amsmath}
\usepackage{amssymb}
\usepackage{amsthm}
\usepackage{amsfonts}
\usepackage{bm}
\usepackage{xfrac}
\usepackage{empheq}
\usepackage[normalem]{ulem} 
\usepackage{soul} 
\usepackage{mathtools}
\DeclarePairedDelimiter\ceil{\lceil}{\rceil}
\DeclarePairedDelimiter\floor{\lfloor}{\rfloor}

\newtheorem{theorem}{Theorem}

\newtheorem{example}{Example}

\newtheorem{lemma}{Lemma}

\newtheorem{remark}{Remark}
\theoremstyle{definition}


\usepackage[a4paper,bindingoffset=0.2in,%
left=1in,right=1in,top=1in,bottom=1in,%
footskip=.25in]{geometry}
\graphicspath{{figs/}}
\newenvironment{sproof}{\noindent{ \emph{ Sketch of proof:}}}{\qed\bigskip}
\interdisplaylinepenalty=2500
\begin{document}
	\newgeometry{left=0.7in,right=0.7in,top=.709in,bottom=1.02in}
	\title{Cache-Aided Variable-Length Coding with Perfect Privacy}
\vspace{-5mm}
\author{
		\IEEEauthorblockN{Amirreza Zamani, Mikael Skoglund \vspace*{0.5em}
			\IEEEauthorblockA{\\
                              Division of Information Science and Engineering, KTH Royal Institute of Technology \\
				Email: \protect amizam@kth.se, skoglund@kth.se }}
		}
	\maketitle
%
\begin{abstract} 
	A cache-aided compression problem with perfect privacy is studied, where a server has access to a database of $N$ files, $(Y_1,...,Y_N)$, each of size $F$ bits. The server is connected to $K$ users through a shared link, 
	where each user has access to a local cache of size $MF$ bits. In the placement phase, the server fills the users$'$ caches without prior knowledge of their future demands, while the delivery phase takes place after the users send their demands to the server. 
	We assume that each file $Y_i$ is arbitrarily correlated with a private attribute $X$, and an adversary is assumed to have access to the shared link. The users and the server have access to a shared secret key $W$.  
	The goal is to design the cache contents and the delivered message $\cal C$ 
	such that the average length of $\mathcal{C}$ is minimized, while satisfying: 
i. The response $\cal C$ does not disclose any information about $X$, i.e., $X$ and $\cal C$ are statistically independent yielding $I(X;\mathcal{C})=0$, which corresponds to the perfect privacy constraint; ii. User $i$ is able to decode its demand, $Y_{d_i}$, by using its local cache $Z_i$, delivered message $\cal C$, and the shared secret key $W$.
	 Due to the correlation of database with the private attribute, existing codes for cache-aided delivery do not fulfill the perfect privacy constraint. Indeed, in this work, we propose a lossless variable-length coding scheme that combines privacy-aware compression with coded caching techniques. In particular, we use two-part code construction and Functional Representation Lemma. 
	 Furthermore, we propose an alternative coding scheme based on the minimum entropy coupling concept and a greedy entropy-based algorithm. We show that the proposed scheme improves the previous results obtained by Functional Representation Lemma. Considering two special cases we improve both coding schemes using the common information concept. Finally, we compare the proposed schemes in numerical examples and provide an application considering an encoder with limited buffer size.  
\end{abstract}
\section{Introduction}
We consider the system model depicted in Fig. \ref{wii}, wherein a central server has access to a database consisting of $N$ files, denoted by $Y_1, \ldots, Y_N$, each of size $F$ bits. These files are jointly distributed according to a probability distribution $P_{X Y_1 \cdots Y_N}$, where $X$ represents a private latent variable. It is assumed that the realization of the private variable $X$ is known to the server. The server communicates with $K$ users through a shared broadcast link. Each user $k \in [K] \triangleq \{1, \ldots, K\}$ is equipped with a local cache of capacity $MF$ bits. Additionally, the server and the users share a common secret key $W$ of size $T$. 
The system operates in two distinct phases: the placement phase and the delivery phase, in accordance with the coded caching framework proposed in \cite{maddah1}. During the placement phase, the server fills the users’ local cache memories using the database. Let $Z_k$ denote the cache content of user $k$, at the end of this phase. In the delivery phase, each user sends a file request, where $d_k \in [N]$ denotes the index of the file requested by user $k$. The server then communicates a response denoted by $\mathcal{C}$, over the shared link, designed to satisfy all user demands simultaneously.
 We assume that an adversary has access to the shared link as well, observes $\cal C$ and aims to infer information about the latent variable $X$. The adversary is assumed to have access only to $\mathcal{C}$ and neither the secret key $W$ nor the cache contents $\{Z_k\}_{k=1}^K$.
 Due to the statistical dependence between the files and the latent variable $X$, the coded caching and delivery techniques as proposed in \cite{maddah1} are insufficient to ensure privacy.
 The cache-aided delivery with perfect privacy problem aims to design the response $\mathcal{C}$ with minimum possible average length while satisfying two constraints: (i) a perfect privacy constraint, requiring that $\mathcal{C}$ is statistically independent of $X$, and (ii) a zero-error decoding constraint, which mandates that each user $k$ can reconstruct its requested file $Y_{d_k}$ without error using its cache contents $Z_k$, the shared secret key $W$, and the server’s response $\mathcal{C}$, i.e., $\mathbb{P}\{\hat{Y}_{d_k} \neq Y_{d_k}\} = 0$, for all $k \in [K]$ where $\hat{Y}_{d_k}$ denote the decoded message of user $k$ using $W$, $\cal C$, and $Z_k$. The design of $\mathcal{C}$ is based on the worst-case demand combinations $\mathbf{d} = (d_1, \ldots, d_K)$, and the expectation is taken over the joint distribution of the database files and the latent variable. 
 The goal of the cache-aided private delivery problem is to find a response $\mathcal{C}$ with minimum possible average length that satisfies a certain privacy constraint and the zero-error decodability constraint of users. Here, we consider the worst case demand combinations $d=(d_1,..,d_K)$ to construct $\cal C$, and the expectation is taken over the randomness in the database. 
 \begin{figure}[]
 	\centering
 	\includegraphics[scale = .12]{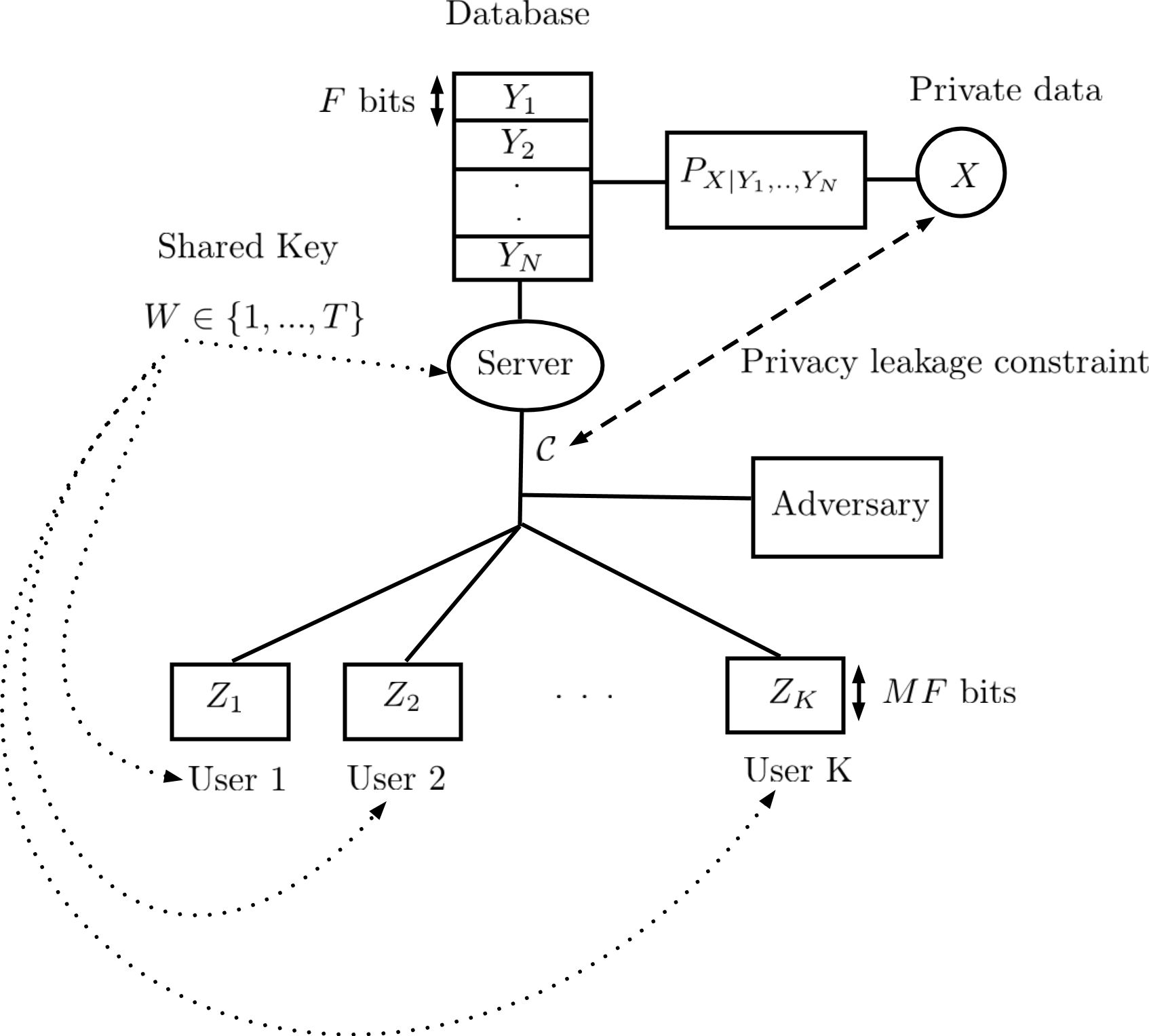}
 	\caption{In this work a server wants to send a response via a shared link to fulfill users$'$ requests, but due to the correlation of database with the private data existing schemes cannot be used. In the delivery phase, we hide the information about $X$ using one-time-pad coding and communicate the rest of response using different techniques.}
 	\label{wii}
 \end{figure} 
We remark that, we have a cache-aided variable-length compression problem with a privacy constraint. Accordingly, our approach utilizes tools and techniques from multiple domains, including privacy-preserving mechanism design, information-theoretic source compression, and cache design and coded delivery problems. Specifically, our coding scheme draws upon variable-length and lossless compression techniques developed in \cite{kostala}, and incorporates cache placement and delivery strategies proposed by the coded caching framework in \cite{maddah1}.
Furthermore, we propose a new coding scheme to build $\cal C$, where we use the minimum entropy coupling concept and a greedy entropy-based algorithm that are studied in \cite{kocaoglu2017entropic, compton2023minimum, shkel2023information}. 
We compare the new proposed scheme with the existing one and show that the proposed achievable scheme can significantly improve the previous result. 

The problems of privacy mechanism design, data compression, and cache-aided content delivery have received increased attention in recent years
\cite{maddah1, lim,kocaoglu2017entropic, compton2023minimum, shkel2023information, wang33,  lu, denizjadid, shannon, dworkal, dwork1, denizjadid2, gunduz2010source, schaefer, yamamoto1988rate, yamamoto, cufff, Calmon2, makhdoumi, borz, khodam, Khodam22,kostala, kostala2, calmon4, asoo, issa2, shah, asoodeh1,deniz3,privatecache,Ali,demand,10464318,zero,nonzero,zamani2025variable}. 
In particular, the seminal work in \cite{maddah1} considers a cache-aided broadcast network, where a single server communicates with multiple users via a shared, error-free bottleneck link. The fundamental trade-off between cache memory and delivery rate known as the rate-memory trade-off, was characterized up to a constant multiplicative gap. Subsequent works have improved these bounds for various settings. For instance, tighter bounds have been established in \cite{lim, wang33}, and the exact rate-memory trade-off for the case of uncoded cache placement was derived in \cite{lu}. Moreover, the work in \cite{denizjadid} extends the coded caching model to scenarios in which users have heterogeneous quality-of-service requirements, by considering lossy content delivery under different distortion constraints. The problem of coded caching with private caches is studied in \cite{privatecache}. Coded caching with simultaneously private caches and demands is studied in \cite{Ali}. In \cite{demand}, coded caching with demand and file privacy is studied. In contrast to previous works, which focus on the privacy of files, user demands, and cache contents, the emphasis here is on analyzing the leakage of the private attribute $X$, which is arbitrarily correlated with the dataset, to the adversary.

In the context of source compression under privacy constraints, the notion of perfect secrecy was introduced by Shannon in his seminal work \cite{shannon}, wherein public and private data are required to be statistically independent. In the classical Shannon cipher system, the goal is to transmit one of $M$ possible messages over a communication channel that is observed by an eavesdropper. Shannon proved that perfect secrecy is achievable if and only if the shared secret key has entropy at least equal to the entropy of the message, i.e., the key length must be no less than $\log M$ bits \cite{shannon}.

The requirement of perfect secrecy, characterized by the independence of disclosed (public) data and private information has also been adopted in the differential privacy literature, notably in \cite{dworkal, dwork1}, where sanitized versions of sensitive databases are released for public use. Equivocation, defined as the conditional entropy of the private data given the public output, has been widely used as a metric for quantifying information leakage in information-theoretic security, as in \cite{denizjadid2, gunduz2010source, schaefer}. A rate-distortion approach to information-theoretic secrecy is studied in \cite{yamamoto1988rate} and a related source coding problem with secrecy is studied in \cite{yamamoto}.  
Considering privacy design problems through the lens of information theory, 
fundamental limits of the privacy utility trade-off measuring the leakage using estimation-theoretic guarantees are studied in \cite{Calmon2}.
In \cite{makhdoumi}, the concept of a privacy funnel is introduced, where the privacy-utility trade-off has been studied using the log-loss as a privacy measure as well as a distortion measure for utility. 
In \cite{yamamoto}, the privacy-utility trade-off considering equivocation and expected distortion as measures of privacy and utility are studied.
In \cite{borz}, the problem of privacy-utility trade-off considering mutual information both as measures of utility and privacy is studied. It is shown that under the perfect privacy assumption, the optimal privacy mechanism problem can be obtained as the solution of a linear program. 
In \cite{khodam}, the work \cite{borz} is extended by relaxing the perfect privacy assumption allowing some small bounded leakage. Specifically, privacy mechanisms with a per-letter (point-wise) privacy criterion considering an invertible kernel are designed allowing a bounded leakage. In \cite{Khodam22}, this result is generalized to a non-invertible leakage matrix.
In \cite{kostala}, \emph{secrecy by design} problem is studied under the perfect secrecy assumption. Bounds on secure decomposition have been derived using the Functional Representation Lemma. These results are obtained under the perfect secrecy assumption.
In \cite{shah}, the privacy problems considered in \cite{kostala} are generalized by relaxing the perfect secrecy constraint and allowing some leakages. 
Furthermore, in \cite{shah}, the privacy-utility trade-off with two different per-letter privacy constraints is studied. 
In \cite{kostala}, both fixed-length and variable-length source compression problems are studied under privacy constraints. Upper and lower bounds on the expected length of the encoded message are obtained, assuming that the encoded output is statistically independent of the private data, thereby ensuring perfect secrecy. Extending this line of work, \cite{kostala2} studies the trade-offs among secrecy, compression rate, and the size of the shared key in the context of lossless compression. This study considers different privacy metrics, including perfect secrecy, secrecy-by-design, maximal leakage, mutual information leakage, and local differential privacy, and characterizes their implications on compression performance under strong privacy guarantees. In \cite{zero}, \cite{nonzero} and \cite{zamani2025variable}, we have extended the compression problems with secrecy constraints considered in \cite{kostala} and \cite{kostala2}. Specifically, in \cite{zero}, we have improved the bounds in \cite{kostala} in two cases using the concept of \emph{common information} and proposing new achievability schemes. Furthermore, in \cite{nonzero}, we have generalized the results in \cite{kostala} and \cite{kostala2} by by relaxing the perfect secrecy assumption and allowing a bounded leakage.

The conference versions regarding this paper can be found in \cite{amircache} and \cite{amircache2}.\\
The main contribution of this work is to generalize the 
problem considered in \cite{maddah1} by considering correlation between the database and the private latent data.
Our contribution can be summarized as follows:\\
\textbf{(i)} In Section \ref{sec:system}, we define the variable-length compression problem under the perfect privacy constraint and provide its relevance to the prior works. \\ 
\textbf{(ii)} In Section \ref{background}, we provide a brief background and essential lemmas to build different coding schemes.\\
\textbf{(iii)} In Section \ref{sec:resul}, we 
propose three coding schemes as follows:
\begin{itemize}
	\item[$\bullet$] We use variable-length, lossless compression techniques inspired by \cite{kostala} to propose a novel approach to cache-aided content delivery in the presence of an adversary. The proposed method ensures privacy while enabling efficient delivery. We develop a coding scheme using a two-part code structure that uses the Functional Representation Lemma (FRL) along with one-time pad code. This design ensures that information about the private attribute $X$ is concealed from the adversary, while enabling users to recover their requested files without error.
	\item[$\bullet$] To enhance the performance of the initial construction, we introduce an alternative two-part code, based on a greedy entropy-based algorithm proposed in \cite{kocaoglu2017entropic}. This approach is again combined with one-time pad coding to provide improved bounds on the average length of the code.
	\item[$\bullet$]For two specific scenarios, we derive improved bounds by using the concept of common information. Notably, we show that when the size of the private data is large, the obtained bounds are improved and the required size of the shared secret key can be significantly reduced, while still achieving the desired privacy and decodability constraints. 
\end{itemize}
Furthermore, we compare the obtained bounds in numerical examples.\\
\textbf{(iv)} In Section \ref{st}, we provide an application considering an encoder equipped by a local buffer with limited size and the server sends its response to the encoder. The encoder sequentially encodes the response to build a message and transmits it via the shared link.\\
The paper is concluded in Section \ref{concul}.


\section{system model and Problem Formulation} \label{sec:system}
Let $Y_i$ denote the $i$-th file in the database, which is of size $F$ bits, i.e., $\mathcal{Y}_i\in\{1,\ldots,2^F\}$ and $|\mathcal{Y}_i|=2^F$. In this work, we assume that $N\geq K$; however, the results can be extended to other cases as well. Let the discrete random variable (RV) $X$ defined on the finite alphabet $\cal{X}$ describe the private latent variable and be arbitrarily correlated with the files in the database $Y=(Y_1,\ldots,Y_N)$ where $|\mathcal{Y}|= |\mathcal{Y}_1|\times\ldots\times|\mathcal{Y}_N|=(2^F)^N$ and $\mathcal{Y}= \mathcal{Y}_1\times\ldots\times\mathcal{Y}_N$. 
We show the joint distribution of the database and the private data by $P_{XY_1\cdot Y_N}$ and marginal distributions of $X$ and $Y_i$ by vectors $P_X$ and $P_{Y_i}$ defined on $\mathbb{R}^{|\mathcal{X}|}$ and $\mathbb{R}^{2^F}$ given by the row and column sums of $P_{XY_1\cdot Y_N}$. The relation between $X$ and $Y$ is given by the matrix $P_{X|Y_1\cdot Y_N}$ defined on $\mathbb{R}^{|\mathcal{X}|\times(2^F)^N}$.
We emphasize that each user has access to a local cache of size $MF$ bits.
The shared secret key is denoted by the discrete RV $W$ defined on $\{1,\ldots,T\}$, and is assumed to be known by the server and the users, but not the adversary. Furthermore, we assume that $W$ is uniformly distributed and is independent of $X$ and $Y$. 
Similarly to \cite{maddah1}, we have $K$ caching functions to be used during the placement phase:
\begin{align}
\theta_k: [|\mathcal{Y}|] \rightarrow [2^{\floor{FM}}],\ \forall k\in[K], 
\end{align} 
such that
\begin{align}
Z_k=\theta_k(Y_1,\ldots,Y_N),\ \forall k\in[K].
\end{align} 
Let the vector $(Y_{d_1},\ldots,Y_{d_K})$ denote the demands sent by the users at the beginning of the delivery phase, where $(d_1,\ldots,d_K)\in[N]^K$.  
A variable-length prefix-free code with a shared secret key of size $T$ consists of mappings:
\begin{align*}
&(\text{encoder}) \ \mathcal{C}: \ [|\mathcal{Y}|]\times [T]\times[N]^K \rightarrow \{0,1\}^*\\
&(\text{decoder}) \mathcal{D}_k\!: \! \{0,1\}^*\!\!\times\! [T]\!\times\! [2^{\floor{MF}}]\!\times\! [N]^K\!\!\!\rightarrow\! 2^F\!\!\!,\ \! \forall k\!\in\![K].
\end{align*}
The output of the encoder $\mathcal{C}(Y,W,d_1,\ldots,d_K)$ is the codeword the server sends over the shared link in order to satisfy the demands of the users $(Y_{d_1},\ldots,Y_{d_K})$. At the user side, user $k$ employs the decoding function $\mathcal{D}_k$ to recover its demand $Y_{d_k}$, i.e., $\hat{Y}_{d_k}=\mathcal{D}_k(Z_k,W,\mathcal{C}(Y,W,d_1,\ldots,d_K),d_1,\ldots,d_K)$.
Since the code is prefix-free, no codeword in the image of $\cal C$ is a prefix of another codeword. The variable-length code $(\mathcal{C},\mathcal{D}_1,..,\mathcal{D}_K)$ is lossless if for all $k\in[K]$ we have
\begin{align}\label{choon}
\mathbb{P}(\mathcal{D}_k(\mathcal{C}(Y,W,d_1,\ldots,d_K),W,Z_k,d_1,\ldots,d_K)\!=\!Y_{d_k})\!=\!1.
\end{align} 
In the following, we define perfectly private codes.
The code $(\mathcal{C},\mathcal{D}_1,\ldots,\mathcal{D}_K)$ is \textit{perfectly private} if
\begin{align}
I(\mathcal{C}(Y,W,d_1,\ldots,d_K);X)=0,\label{lashwi}
\end{align}
Let $\xi$ be the support of $\mathcal{C}(Y,W,d_1,\ldots,d_K)$, where $\xi\subseteq \{0,1\}^*$. For any $c\in\xi$, let $\mathbb{L}(c)$ be the length of the codeword. The lossless code $(\mathcal{C},\mathcal{D}_1,\ldots,\mathcal{D}_K)$ is \textit{$(\alpha,T,d_1,\ldots,d_K)$-variable-length} if 
\begin{align}\label{jojowi}
\mathbb{E}(\mathbb{L}(\mathcal{C}(Y,w,d_1,\ldots,d_K)))\!\leq\! \alpha,\ \forall w\!\in\!\![T]\ \text{and}\ \forall d_1,\ldots,d_K,
\end{align} 
and $(\mathcal{C},\mathcal{D}_1,\ldots,\mathcal{D}_K)$ satisfies \eqref{choon}.
Finally, let us define the set $\mathcal{H}(\alpha,T,d_1,\ldots,d_K)$ as follows:\\
$\mathcal{H}(\alpha,T,d_1,\ldots,d_K)\triangleq\{(\mathcal{C},\mathcal{D}_1,\ldots,\mathcal{D}_K): (\mathcal{C},\mathcal{D}_1,\ldots,\mathcal{D}_K)\ \text{is}\ \text{perfectly-private and}\\ (\alpha,T,d_1,\ldots,d_K)\text{-variable-length}  \}$. 
The cache-aided compression design problem with perfect privacy can be stated as follows
\begin{align}
\mathbb{L}(P_{XY_1\cdot Y_N},T)&=\!\!\!\!\!\inf_{\begin{array}{c} 
	\substack{d_1,\ldots,d_K,(\mathcal{C},\mathcal{D}_1,\ldots,\mathcal{D}_K)\in\mathcal{H}(\alpha,T,d_1,\ldots,d_K)}
	\end{array}}\alpha.\label{main1wi}
\end{align} 
\begin{remark}
	\normalfont 
	By letting $M=0$, i.e., 
	local caches do not exist, $N=1$, and $K=1$, \eqref{main1wi} leads to the privacy-compression rate trade-off studied in \cite{kostala}, \cite{zero}, \cite{nonzero} and \cite{zamani2025variable}, where upper and lower bounds have been derived.
\end{remark}
\begin{remark}
	\normalfont 
	By letting $M=0$, and considering $X=(X_1,\ldots,X_N)$, a similar privacy-utility trade-off has been studied in \cite{zamani2022multi}, where each user wants a subvector of the database $Y=(Y_1,\ldots,Y_N)$ that is correlated with $X=(X_1,\ldots,X_N)$ and the server maximizes a linear combination of utilities. Using the Functional Representation Lemma and Strong Functional Representation Lemma, upper and lower bounds have been derived, which are shown to be tight within a constant. 
\end{remark}
\begin{remark}
		\normalfont
	In this paper, to design a coding scheme, we consider the worst case demand combinations. Latter follows since \eqref{jojowi} must hold for all possible combinations of the demands.
\end{remark}
\section{Preliminaries and Background}\label{background}
Here, we present the essential results to obtain the design of the coding schemes.
 \begin{lemma}\label{FRL}(FRL \cite[Lemma~1]{kostala}):
	For any pair of RVs $(X,Y)$ distributed according to $P_{XY}$ supported on alphabets $\mathcal{X}$ and $\mathcal{Y}$, respectively, where $|\mathcal{X}|$ is finite and $|\mathcal{Y}|$ is finite or countably infinite, there exists a RV $U$ supported on $\mathcal{U}$ such that $X$ and $U$ are independent, i.e., 
	\begin{align}\label{t1wi}
	I(U;X)= 0,
	\end{align}
	$Y$ is a deterministic function of $U$ and $X$, i.e., 
	\begin{align}\label{t2wi}
	H(Y|U,X)=0,
	\end{align}
	and 
	\begin{align}\label{prwi}
	|\mathcal{U}|\leq |\mathcal{X}|(|\mathcal{Y}|-1)+1.
	\end{align}
	Furthermore, if $X$ is a deterministic function of $Y$, we have
	\begin{align}\label{provewi}
	|\mathcal{U}|\leq |\mathcal{Y}|-|\mathcal{X}|+1.
	\end{align}
\end{lemma}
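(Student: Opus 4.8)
The plan is to establish Lemma~\ref{FRL} by an explicit ``common refinement of interval partitions'' construction, which delivers the independence property, the functional representation, and both cardinality bounds at once. First I fix an ordering of $\mathcal{Y}=\{1,2,\ldots\}$ and, for each $x\in\mathcal{X}$, tile the unit interval $[0,1)$ by consecutive subintervals $\{I_{x,y}\}_{y\in\mathcal{Y}}$ with $|I_{x,y}|=P_{Y|X}(y\mid x)$; this partition has at most $|\mathcal{Y}|-1$ interior breakpoints, and strictly fewer when some conditional probabilities vanish. Let $\mathcal{P}$ be the common refinement of the $|\mathcal{X}|$ partitions $\{I_{x,\cdot}\}_{x\in\mathcal{X}}$. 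Since the breakpoints of $\mathcal{P}$ are the union of the breakpoints of the individual partitions, there are at most $|\mathcal{X}|\,(|\mathcal{Y}|-1)$ of them, so $\mathcal{P}$ has at most $|\mathcal{X}|\,(|\mathcal{Y}|-1)+1$ cells; index the nonempty cells by a set $\mathcal{U}$ and write $J_u$ for the cell labelled $u$.

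Next I define $U$ to be independent of $X$ with $P_U(u)=|J_u|$ --- a valid pmf because the $J_u$ tile $[0,1)$ --- and I define $g\colon\mathcal{U}\times\mathcal{X}\to\mathcal{Y}$ by letting $g(u,x)$ be the unique $y$ with $J_u\subseteq I_{x,y}$, which is well defined precisely because $\mathcal{P}$ refines each $\{I_{x,\cdot}\}$. Setting $Y'=g(U,X)$, one checks that $P_{Y'|X}(y\mid x)=\sum_{u:\,J_u\subseteq I_{x,y}}|J_u|=|I_{x,y}|=P_{Y|X}(y\mid x)$, so $(X,Y')$ has the same joint law as $(X,Y)$; after this relabelling $Y=g(U,X)$, which gives $H(Y\mid U,X)=0$ and $I(U;X)=0$ by construction, while $|\mathcal{U}|\le|\mathcal{X}|\,(|\mathcal{Y}|-1)+1$ from the refinement count. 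This proves \eqref{t1wi}, \eqref{t2wi}, and \eqref{prwi}.

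For the sharpened bound \eqref{provewi}, suppose $X=\phi(Y)$ for some deterministic $\phi$, and put $\mathcal{Y}_x=\phi^{-1}(x)$, so that $\{\mathcal{Y}_x\}_{x\in\mathcal{X}}$ partitions $\mathcal{Y}$ and $P_{Y|X}(\cdot\mid x)$ is supported on $\mathcal{Y}_x$. Then the partition $\{I_{x,\cdot}\}$ has at most $|\mathcal{Y}_x|-1$ interior breakpoints, so the total number of breakpoints of $\mathcal{P}$ is at most $\sum_{x}(|\mathcal{Y}_x|-1)=|\mathcal{Y}|-|\mathcal{X}|$, whence $|\mathcal{U}|\le|\mathcal{Y}|-|\mathcal{X}|+1$. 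The steps that need care rather than ideas are: handling zero-probability symbols so that $\mathcal{U}$ and the maps $g(\cdot,x)$ remain well defined; the countably infinite $\mathcal{Y}$ case, where $\mathcal{P}$ may have countably many cells and \eqref{prwi} is vacuous but the construction is unchanged; and checking that the constructed triple $(X,U,Y)$ genuinely reproduces $P_{XY}$. I expect no real obstacle here --- the entire content is the common-refinement breakpoint count --- so the main effort is simply carrying out these verifications cleanly.
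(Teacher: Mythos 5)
Your construction is correct and is essentially the standard constructive proof of the Functional Representation Lemma that the paper invokes by citation to \cite[Lemma~1]{kostala}: the common refinement of the $|\mathcal{X}|$ interval partitions of $[0,1)$, with the breakpoint count giving \eqref{prwi} in general and \eqref{provewi} when the conditional laws $P_{Y|X}(\cdot\mid x)$ have disjoint supports because $X$ is a function of $Y$. The only point to state explicitly in a full write-up is the coupling step (draw $U$ given $(X,Y)=(x,y)$ with probability $|J_u|/|I_{x,y}|$ for $J_u\subseteq I_{x,y}$, so that $U$ lives on the same space as $(X,Y)$, is independent of $X$, and satisfies $Y=g(U,X)$ almost surely), which your ``relabelling'' remark already implicitly covers.
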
  
As argued in \cite{kostala}, the proof of Lemma~\ref{FRL} is constructive and can be used to obtain the next lemma.
Next, we provide an extension of FRL 
that helps us find upper bound on $\mathbb{L}(P_{XY_1\cdot Y_N},T,C)$. 

\begin{lemma}\label{aghabwi} (\cite[Lemma~2]{kostala})
	For any pair of RVs $(X,Y)$ distributed according to $P_{XY}$ supported on alphabets $\mathcal{X}$ and $\mathcal{Y}$, respectively, where $|\mathcal{X}|$ is finite and $|\mathcal{Y}|$ is finite or countably infinite, there exists a RV $U$ such that it satisfies \eqref{t1wi}, \eqref{t2wi}, and
	\begin{align}
	H(U)\leq \sum_{x\in\mathcal{X}}H(Y|X=x).\label{wi}
	\end{align}
\end{lemma}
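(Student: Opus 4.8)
The plan is to build the random variable $U$ directly as a tuple of independent ``shadow samples'' of $Y$, one for each value of $X$, and then to couple this tuple to the prescribed joint law so that $Y$ becomes the coordinate of $U$ selected by $X$. Concretely, enumerate $\mathcal{X}=\{x_1,\ldots,x_m\}$ with $m=|\mathcal{X}|<\infty$, let $V_{x_1},\ldots,V_{x_m}$ be mutually independent random variables with $V_{x_j}\sim P_{Y|X=x_j}$, jointly independent of $X$, set $U=(V_{x_1},\ldots,V_{x_m})$, and define $Y:=V_X$, i.e.\ the coordinate of $U$ indexed by the realization of $X$. This specifies a joint distribution of $(X,Y,U)$ on $\mathcal{X}\times\mathcal{Y}\times\mathcal{Y}^{m}$, and the first thing I would verify is that its marginal on $(X,Y)$ is exactly $P_{XY}$: summing the joint pmf over $U$ leaves $P_X(x)P_{Y|X}(y|x)$, because each $V_{x_j}$ with $j$ not corresponding to $x$ carries a probability mass summing to one, while the coordinate indexed by $x$ is pinned to $y$.

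Once the coupling is in place, the three claimed properties follow almost immediately. Marginalizing over $Y$ gives $P_{XU}(x,u)=P_X(x)\prod_{j=1}^{m}P_{Y|X}(u_j|x_j)=P_X(x)P_U(u)$, so $X$ and $U$ are independent and \eqref{t1wi} holds. Since $Y$ is by definition the projection of $(U,X)$ onto the coordinate indexed by $X$, it is a deterministic function of $(U,X)$, which yields \eqref{t2wi}. Finally, because the coordinates $V_{x_1},\ldots,V_{x_m}$ of $U$ are mutually independent, $H(U)=\sum_{j=1}^{m}H(V_{x_j})=\sum_{x\in\mathcal{X}}H(Y|X=x)$, so \eqref{wi} holds, in fact with equality. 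This construction can be read as running the Functional Representation Lemma ``in parallel'' over the values of $X$: for each fixed $x$ the map $u\mapsto u_x$ plays the role of the deterministic decoder in Lemma~\ref{FRL}, which is why the constructive proof of that lemma is the natural point of departure.

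The step I expect to require the most care is the consistency of the coupling: one is defining a \emph{new} joint law on $(X,Y,U)$ and must check that it genuinely extends the given $P_{XY}$ rather than quietly altering it — this is the only place where the structure of $P_{XY}$, and not merely the collection of conditionals $P_{Y|X=x}$, enters. A minor secondary point is the countably infinite case: then $U$ takes values in the countable product $\mathcal{Y}^{m}$, the argument goes through verbatim, and if some $H(Y|X=x)=\infty$ the bound \eqref{wi} is vacuous, whereas otherwise all the sums converge and $H(U)=\sum_{x\in\mathcal{X}}H(Y|X=x)$ still holds by independence of the coordinates.
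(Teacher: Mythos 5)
Your construction is correct and is essentially the standard proof of this lemma as given in the cited reference: take $U=(V_x)_{x\in\mathcal{X}}$ with independent coordinates $V_x\sim P_{Y|X=x}$, independent of $X$, and realize $Y$ as the coordinate $V_X$, which yields independence of $U$ and $X$, determinism of $Y$ given $(U,X)$, and $H(U)=\sum_{x}H(Y|X=x)$ by independence of the coordinates. The consistency check that the induced $(X,Y)$-marginal equals $P_{XY}$, and the remark on the countably infinite case, are both handled correctly.
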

 Next, we present a summary of the achievable scheme proposed in \cite[Theorem~1]{maddah1}. We first consider a cache size $M\in\{\frac{N}{K},\frac{2N}{K},\ldots,N\}$ and define $p\triangleq\frac{MK}{N}$, which is an integer. In the placement phase, each file, e.g., $Y_n$, $n\in[N]$, is split into $\binom{K}{p}$ equal size subfiles and labeled as follows
\begin{align}
Y_n=(Y_{n,\Omega}:\Omega\subset[K],|\Omega|=p).
\end{align}
For all $n$, the server places $Y_{n,\Omega}$ in the cache of user $k$ if $k\in \Omega$. As argued in \cite{maddah1}, each user caches total of $N\binom{K-1}{p-1}\frac{F}{\binom{K}{p}}=MF$ bits, which satisfies the memory constraint with equality. In the delivery phase, 
the server sends the following message over the shared link
\begin{align}\label{cache1}
\mathcal{C}'\triangleq(C_{\gamma_1},\ldots,C_{\gamma_L}),
\end{align}
where $L=\binom{K}{p+1}$ and for any $i\in\{1,\ldots,L\}$, $\gamma_i$ is the $i$-th subset of $[K]$ with cardinality $|\gamma_i|=p+1$, furthermore,
\begin{align}\label{cache2}
C_{\gamma_i}\triangleq\oplus_{j\in \gamma_i} Y_{d_j,\gamma_i \backslash \{j\} },
\end{align}
where $\oplus$ denotes bitwise XOR function. Note that $Y_{d_j,\gamma_i \backslash \{j\} }$ is the subfile that is not cached by user $j$, but is requested by it. In other words, considering each subset of $[K]$ with cardinality $|\gamma_i|=p+1$, using the message $C_{\gamma_i}$, each user, e.g., user $j$, is able to decode the subfile $Y_{d_j,\gamma_i \backslash \{j\} }$ that is not cached by it. Considering all the messages in \eqref{cache1} user $i$ can decode file $Y_{d_i}$ completely using $\mathcal{C}'$ and its local cache content $Z_i$. Note that each subfile $C_{\gamma_i}$ has size $\frac{F}{\binom{K}{p}}$ bits. As pointed out in \cite{maddah1}, for other values of $M$ we can use the memory-sharing technique. For more details see \cite[Proof of Theorem~1]{maddah1}.

 Next, we recall the important results regarding upper and lower bounds on the minimum entropy coupling as obtained in \cite{kocaoglu2017entropic,compton2023minimum,shkel2023information}. Similar to \cite{shkel2023information}, for a given joint distribution $P_{XY}$ let the minimum entropy of functional representation of $(X,Y)$ be defined as
\begin{align}
H^*(P_{XY})=\!\!\!\!\!\inf_{\begin{array}{c} 
	\substack{H(Y|X,U)=0,\ I(X;U)=0}
	\end{array}}H(U).\label{minent}
\end{align}
\begin{remark}
	\normalfont
	By letting $\alpha=1$ in \cite[Definition 1]{shkel2023information}, it leads to the same problem in \eqref{minent}.
\end{remark}
\begin{remark}
	\normalfont
	As shown in \cite[Lemma 1]{shkel2023information}, the minimum entropy functional representation and the minimum entropy coupling are related functions. More specifically, $H^*(P_{XY})$ equals to the minimum entropy coupling of the set of PMFs $\{P_{Y|X=x_1},\ldots,P_{Y|X=x_n}\}$, where $\mathcal{X}=\{x_1,\ldots,x_n\}$.
\end{remark}
Let $\mathcal{G}_S$ be the output of the greedy entropy-based algorithm which is proposed in \cite[Section 3]{kocaoglu2017entropic}, i.e., $H^*(P_{XY})\leq H(\mathcal{G}_S)$. More specifically, the corresponding algorithm aims to solve \eqref{minent} but does not achieve the optimal solution in general. 

Next, we recall a result obtained in \cite{compton2023minimum} which shows that $\mathcal{G}_S$ is optimal within $\frac{\log e}{e}\approx 0.53$ bits for $|\mathcal{X}|=2$ and is optimal within $\frac{1+\log e}{2}\approx 1.22$ bits for $|\mathcal{X}|>2$.  
Let $U^*$ achieve the optimal solution of \eqref{minent}, i.e., $H(U^*)=H^*(P_{XY})$.
\begin{theorem}\cite[Th. 3.4, Th. 4.1, Th. 4.2]{compton2023minimum}
	Let $(X,Y)\sim P_{XY}$ and have finite alphabets. When $X$ is binary, we have
	\begin{align}
	H(Profile)&\leq H(U^*)\leq H(\mathcal{G}_S) \leq H(Profile)+\frac{\log e}{e}\nonumber \\ & \approx H(Profile)+0.53.
	\end{align}
	Moreover, for $|\mathcal{X}|>2$ we have
	\begin{align}
	H(Profile)&\leq H(U^*)\leq H(\mathcal{G}_S)\nonumber \\&\leq H(Profile)+\frac{1+\log e}{2}\nonumber \\ & \approx H(Profile)+1.22.
	\end{align}
	Here, $Profile$ corresponds to the profile method proposed in \cite[Section 3]{compton2023minimum}. 
\end{theorem}
Next, we present results on lower bounds on $H^*(P_{XY})$ obtained in a parallel work \cite{shkel2023information}. The lower bounds are obtained by using information spectrum and majorization concepts.
\begin{theorem}\cite[Corollary 2, Th. 2]{shkel2023information}
	Let $(X,Y)\sim P_{XY}$ and have finite alphabets. By letting $\alpha=1$ in \cite[Corollary 2, Th. 2]{shkel2023information}, we have
	\begin{align}
	H(\wedge_{x\in\mathcal{X}}P_{Y|x} )\leq H(Q^*)\leq H(U^*).
	\end{align}
	where $\wedge$ corresponds to the greatest lower bound with respect to majorization and $Q^*$ is defined in \cite[Lemma 3]{shkel2023information}.
\end{theorem}
\begin{remark}
	\normalfont
	In contrast with \cite{compton2023minimum}, the lower bounds in \cite{shkel2023information} are obtained considering \textit{R{\'e}nyi} entropy in \eqref{minent}. In this paper, we consider \textit{Shannon} entropy which is a special case of \textit{R{\'e}nyi} entropy.
\end{remark}
\begin{remark}
	\normalfont
	As argued in \cite[Remark 1]{shkel2023information}, for $\alpha=1$ the (largest) lower bounds obtained in \cite{compton2023minimum} and \cite{shkel2023information} match. Thus, using Theorem 1, for binary $X$ we have
	\begin{align}
	H(Q^*)\leq H(U^*)\leq H(\mathcal{G}_S) \leq H(Q^*)+\frac{\log e}{e},
	\end{align}
	and for $|\mathcal{X}|>2$,
	\begin{align}
	H(Q^*)&\leq H(U^*)\leq H(\mathcal{G}_S)\leq H(Q^*)+\frac{1+\log e}{2}.
	\end{align}
	Moreover, in some cases the lower bound $H(Q^*)$ is tight, e.g., see \cite[Example 2]{shkel2023information}.
\end{remark}
As outlined in \cite{shkel2023information}, the optimal distribution $Q^* = (q_1^*, q_2^*, \ldots)$ with components ordered such that $q_1^* \geq q_2^* \geq \cdots$, can be constructed using a greedy algorithm. Let $q_i = P(Q = q_i)$ denote the probability mass assigned to the $i$-th component. Consider the conditional probability matrix $P_{Y|X}$, where each column corresponds to a conditional distribution vector $P_{Y|X=x}$ for each $x \in \mathcal{X}$. Without loss of generality, assume that the entries in each column are sorted in descending order.
The greedy procedure begins by setting $q_1^*= \min_{x\in\mathcal{X}}\{\max_{y\in{\mathcal{Y}}} P_{Y|X}(y|x)\}$, which corresponds to selecting the smallest value in the first row of the matrix $P_{Y|X}$. This value is then subtracted from the entries in the first row, the matrix is updated accordingly, and each column is re-sorted to maintain descending order. The next value, $q_2^*$, is obtained by again selecting the smallest value in the updated first row. This iterative procedure continues until the sum of the selected values $q_i^*$ equals one, thereby constructing the full distribution $Q^*$. For an illustrative example of this method, see \cite[Example 1]{shkel2023information}.

Next, we recall the definition of the common information between $X$ and $Y$ using \cite{wyner}. For any pair of RVs $(X,Y)$ defined on discrete alphabets $\cal X$ and $\cal Y$, the common information between $X$ and $Y$ can be defined as follows
\begin{align}
C(X;Y)=\inf_{P_{W|XY}:X-W-Y} I(X,Y;W).
\end{align} 
As shown in \cite[Remark A]{wyner} we have
\begin{align}\label{hit}
I(X;Y)\leq C(X;Y)\leq \min\{H(X),H(Y)\}.
\end{align}
One simple observation is that when $H(X|Y)=0$ or $H(Y|X)=0$ we have $I(X;Y)= C(X;Y)$. This follows since when $H(X|Y)=0$ we have
\begin{align*}
I(X,Y;W)=I(Y;W)
\end{align*}
and $W$ can be chosen as $X$, hence $X-W-Y$ holds. However, these are not the only cases where we have $I(X;Y)= C(X;Y)$ \cite{ahlswede2006common}.
In the next example, we provide a more general case where the equality $I(X;Y)= C(X;Y)$ holds.
\begin{example}
	As argued in \cite{comm}, when it is possible to represent $X$ and $Y$ by $(X',V)$ and $(Y',V)$ where $X'$ and $Y'$ are conditionally independent given $V$, we have $I(X;Y)= C(X;Y)$. It is shown that this result is equivalent to the possibility of representing the probability matrix $P_{XY}$ as follows 
	\begin{align*}
	\begin{bmatrix}
	A_1 & 0 &  & 0\\
	0 & A_2 &  & 0\\
	&  & \cdot &\\
	0 & 0 &  & A_k	 
	\end{bmatrix}.
	\end{align*}
	Later follows since we have $p_{y,x}=p_{y'|v}p_{x'|v}p_{v}$. 
\end{example}

In the following we recall the definition of the common information using \cite{gacs1973common}. The G{\'a}cs-K{\"o}rner common information between two RVs $X$ and $Y$ is defined as the entropy of the common part between $X$ and $Y$, i.e., $C(X;Y)=H(U)$, where $U$ is the common part between $X$ and $Y$ \cite{gacs1973common}. The common part $U$ is defined based on the graphical representation of $X$ and $Y$, which is shown to be a deterministic function of only $X$ and only $Y$. Moreover, the G{\'a}cs-K{\"o}rner common information satisfies $C(X;Y)\leq I(X;Y)$ with equality if and only if the Markov chain $X-U-Y$ holds \cite{ahlswede2006common}. For more information about the G{\'a}cs-K{\"o}rner common information see \cite{gacs1973common} and \cite[Appendix B]{wang}. \\
In this paper, we can use both notions of common information defined in \cite{wyner} or \cite{gacs1973common}, since our focus is on scenarios where the common information and mutual information between $X$ and $Y$ are equal, i.e., $C(X;Y)=I(X;Y)$. In other words, our results hold for both Wyner and G{\'a}cs-K{\"o}rner notions of common information.

 \section{Main Results}\label{sec:resul}
 In this section, we derive upper bounds on $\mathbb{L}(P_{XY_1\cdot Y_N},T)$ defined in \eqref{main1wi}. For this, we utilize the two-part code construction, which has been used in \cite{kostala}. We first encode the private data $X$ using a one-time pad \cite[Lemma~1]{kostala2}, then encode the RV found by the achievable scheme in \cite[Theorem~1]{maddah1} by using the Functional Representation Lemma and blocks of $CF$ bits. 
To obtain an alternative achievability similarly we employ the two-part code construction. We first encode the private data $X$ using a one-time pad \cite[Lemma~1]{kostala2}, then encode the RV found by the achievable scheme in \cite[Theorem~1]{maddah1} by using the greedy entropy-based algorithm in \cite{kocaoglu2017entropic}. Finally, we propose an scheme using common information concept.
 \subsection{First design based on the FRL (Lemma 1):}\label{ty}
 In this part, we present our first achievable scheme which leads to upper bounds on \eqref{main1wi}. For simplicity let 
 $|\mathcal{C}'|$ be the cardinality of the codeword defined in \eqref{cache1} where $|\mathcal{C}'|= |\mathcal{C}_{\gamma_1}|\times\ldots\times|\mathcal{C}_{\gamma_L}|$. 
 \begin{theorem}\label{th1}
 	Let RVs $(X,Y)=(X,Y_1,\ldots,Y_N)$ be distributed according to $P_{XY_1\cdot Y_N}$ supported on alphabets $\mathcal{X}$ and $\mathcal{Y}$, where $|\mathcal{X}|$ is finite and $|\mathcal{Y}|$ is finite or countably infinite, and let the shared secret key size be $|\mathcal{X}|$, i.e., $T=|\mathcal{X}|$. Furthermore, let $M\in\{\frac{N}{K},\frac{2N}{K},\ldots,N\}$. Then, we have
 	\begin{align}
 	\mathbb{L}(P_{XY},|\mathcal{X}|)\leq \!\sum_{x\in\mathcal{X}}\!\!H(\mathcal{C}'|X=x)\!+\!1+\!\ceil{\log (|\mathcal{X}|)},\label{koonwi}
 	\end{align}
 	where $\mathcal{C}'$ is as defined in \eqref{cache1}, and if $|\mathcal{C}'|$ is finite we have
 	\begin{align}
 	\mathbb{L}(P_{XY},|\mathcal{X}|) \leq \ceil{\log\left(|\mathcal{X}|(|\mathcal{C}'|-1)+1\right)}+\ceil{\log (|\mathcal{X}|)}.\label{koon2wi}
 	\end{align}
 	Finally, if $X$ is a deterministic function of $\mathcal{C}'$, we have
 	\begin{align}\label{gohwi}
 	\mathbb{L}(P_{XY},|\mathcal{X}|) \leq \ceil{\log(\left(|\mathcal{C}'|-|\mathcal{X}|+1\right))}+\ceil{\log (|\mathcal{X}|)}.
 	\end{align}
 \end{theorem}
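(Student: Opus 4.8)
\emph{Proof strategy.} The plan is to build $\mathcal{C}$ as a two-part codeword. In the placement phase we keep the scheme of \cite{maddah1} recalled above verbatim: split each $Y_n$ into $\binom{K}{p}$ subfiles with $p=MK/N$, and cache $Y_{n,\Omega}$ at user $k$ iff $k\in\Omega$, which meets the memory constraint with equality. For delivery, fix a demand vector $d=(d_1,\ldots,d_K)$ with distinct entries (possible since $N\ge K$; this is the worst case) and let $\mathcal{C}'=(C_{\gamma_1},\ldots,C_{\gamma_L})$ be the message of \eqref{cache1}--\eqref{cache2}, a deterministic function of $(Y,d)$. Since $\mathcal{C}'$ is correlated with $X$ we cannot send it directly; instead, apply the constructive FRL of Lemma~\ref{FRL} to the pair $(X,\mathcal{C}')$, obtaining a RV $U$ with $I(U;X)=0$ (so \eqref{t1wi}), $H(\mathcal{C}'\mid U,X)=0$ (so \eqref{t2wi}), $|\mathcal{U}|\le|\mathcal{X}|(|\mathcal{C}'|-1)+1$ by \eqref{prwi}, and $|\mathcal{U}|\le|\mathcal{C}'|-|\mathcal{X}|+1$ by \eqref{provewi} in case $X$ is a deterministic function of $\mathcal{C}'$; by Lemma~\ref{aghabwi}, this $U$ can moreover be chosen so that $H(U)\le\sum_{x\in\mathcal{X}}H(\mathcal{C}'\mid X=x)$, cf.\ \eqref{wi}. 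The transmitted codeword is $\mathcal{C}=(A,E)$, where $A$ masks $X$ with the key $W$ via the one-time pad of \cite[Lemma~1]{kostala2} (so $A$ is uniform on $[|\mathcal{X}|]$ and independent of $X$, which is where $T=|\mathcal{X}|$ is used), written as a fixed-length string of $\ceil{\log(|\mathcal{X}|)}$ bits, and $E$ is a prefix-free encoding of $U$: a Huffman (or Shannon) code for the bound \eqref{koonwi}, and a fixed-length code of $\ceil{\log(|\mathcal{U}|)}$ bits for \eqref{koon2wi} and \eqref{gohwi}.

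Next I would verify the three requirements. \emph{Decodability:} user $k$ reads $A$ and $W$ and recovers $X$ by undoing the pad, decodes $U$ from $E$, reconstructs $\mathcal{C}'$ through the deterministic map of $(U,X)$ guaranteed by \eqref{t2wi}, and finally recovers $Y_{d_k}$ from $\mathcal{C}'$ and the cache $Z_k$ with the decoder of \cite{maddah1}; hence \eqref{choon} holds with zero error for every $k$. \emph{Prefix-freeness:} $A$ has a fixed length $\ceil{\log(|\mathcal{X}|)}$ and $E$ is prefix-free, so if $(a_1,e_1)$ is a prefix of $(a_2,e_2)$ then $a_1=a_2$ and then $e_1=e_2$; thus $\mathcal{C}$ is prefix-free. \emph{Perfect privacy:} $W$ is uniform and independent of $(X,Y)$, hence of $(X,\mathcal{C}')$ and --- with the randomness internal to the FRL construction taken independent of $W$ --- also of $(X,U)$; for each $x$, given $X=x$ the part $A$ is a bijective image of $W$, hence uniform on $[|\mathcal{X}|]$ and independent of $U$ and therefore of $E$, while $E$ is independent of $X$ by \eqref{t1wi}; consequently $\mathbb{P}(A=a,E=e\mid X=x)=\frac{1}{|\mathcal{X}|}\,\mathbb{P}(E=e)$ does not depend on $x$, i.e.\ $I(\mathcal{C};X)=I(A,E;X)=0$ and \eqref{lashwi} holds.

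Finally, the length. Since $\mathbb{L}(\mathcal{C})=\ceil{\log(|\mathcal{X}|)}+\mathbb{L}(E)$ with a deterministic first term, and $E$ does not depend on the key, \eqref{jojowi} holds for every $w\in[T]$ once it holds for $\mathbb{E}[\mathbb{L}(E)]$. A Huffman (or Shannon) code gives $\mathbb{E}[\mathbb{L}(E)]<H(U)+1\le\sum_{x\in\mathcal{X}}H(\mathcal{C}'\mid X=x)+1$, hence \eqref{koonwi}. A fixed-length code gives $\mathbb{L}(E)=\ceil{\log(|\mathcal{U}|)}\le\ceil{\log(|\mathcal{X}|(|\mathcal{C}'|-1)+1)}$ when $|\mathcal{C}'|$ is finite, hence \eqref{koon2wi}; and under the hypothesis that $X$ is a deterministic function of $\mathcal{C}'$, the sharper cardinality bound $|\mathcal{U}|\le|\mathcal{C}'|-|\mathcal{X}|+1$ gives \eqref{gohwi}. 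Since \eqref{jojowi} must hold for every demand vector, one runs the construction for the worst-case demand (with $\mathcal{C}'$ its delivery message), and the resulting $\alpha$ upper-bounds $\mathbb{L}(P_{XY},|\mathcal{X}|)$ in \eqref{main1wi}.

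I do not expect a genuine obstacle here: given Lemmas~\ref{FRL} and \ref{aghabwi} and the one-time-pad lemma, the construction is essentially mechanical. The step deserving the most care is the privacy argument --- showing that the masked part $A$ and the FRL part $E$ are \emph{jointly} (not merely marginally) independent of $X$, which hinges on keeping the FRL randomization independent of $W$ --- together with the bookkeeping needed so that the average-length guarantee holds uniformly over all keys $w$ and all demand vectors, as required by \eqref{jojowi}.
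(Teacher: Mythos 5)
Your proposal is correct and follows essentially the same route as the paper's own proof: one-time pad on $X$ with the size-$|\mathcal{X}|$ key for the first part, the constructive FRL (via Lemmas~\ref{FRL} and \ref{aghabwi}) applied to $(X,\mathcal{C}')$ for the second part, Huffman/fixed-length encoding of $U$ for the respective bounds, and the same decoding chain and joint-independence privacy argument. The only nitpick is your parenthetical that distinct demands constitute the worst case --- Example~1 in the paper shows the worst case can be a repeated demand --- but since your construction and bounds apply to every demand vector and you take the maximum at the end, this does not affect the proof.
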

 \begin{sproof}
 	The complete proof is provided in Appendix~A.
 	In the placement phase, we use the same scheme as discussed before. In the delivery phase, we use the following strategy.
 	We use two-part code construction to achieve the upper bounds. As shown in Fig. \ref{achieve}, we first encode the private data $X$ using one-time pad coding \cite[Lemma~1]{kostala2}, which uses $\ceil{\log(|\mathcal{X}|)}$ bits. The rest follows since in the one-time pad coding, the RV added to $X$ is the shared key, which is of size $|\mathcal{X}|$, and as a result the output has uniform distribution.
 	Next, we produce $U$ based on FRL in Lemma~\ref{FRL} using $Y\leftarrow \mathcal{C}'$ and $X \leftarrow X$, where $\mathcal{C}'$, defined in \eqref{cache1}, is the response that the server sends over the shared link to satisfy the users$'$ demands \cite{maddah1}. Note that to produce such a $U$ we follow the construction used in Lemma \ref{aghabwi}. Thus, we have
 	\begin{align}
 	H(\mathcal{C}'|X,U)&=0,\label{kharkosde}\\
 	I(U;X)&=0,
 	\end{align}  
 	and $U$ also satisfies \eqref{wi}. Thus, we obtain
 	\begin{align*}
 	\mathbb{L}(P_{XY},|\mathcal{X}|)\leq \sum_{x\in\mathcal{X}}H(\mathcal{C}'|X=x)+\!1+\!\ceil{\log (|\mathcal{X}|)}.
 	\end{align*}
 	To prove \eqref{koon2wi} we use the same coding scheme with the bound in \eqref{prwi}. If $X$ is a deterministic function of $\mathcal{C}'$, we can use the bound in \eqref{provewi} that leads to \eqref{gohwi}. Moreover, for the leakage constraint we note that the randomness of one-time-pad coding is independent of $X$ and the output of the FRL. 
 \begin{figure}[h]
 	\centering
 	\includegraphics[scale = .1]{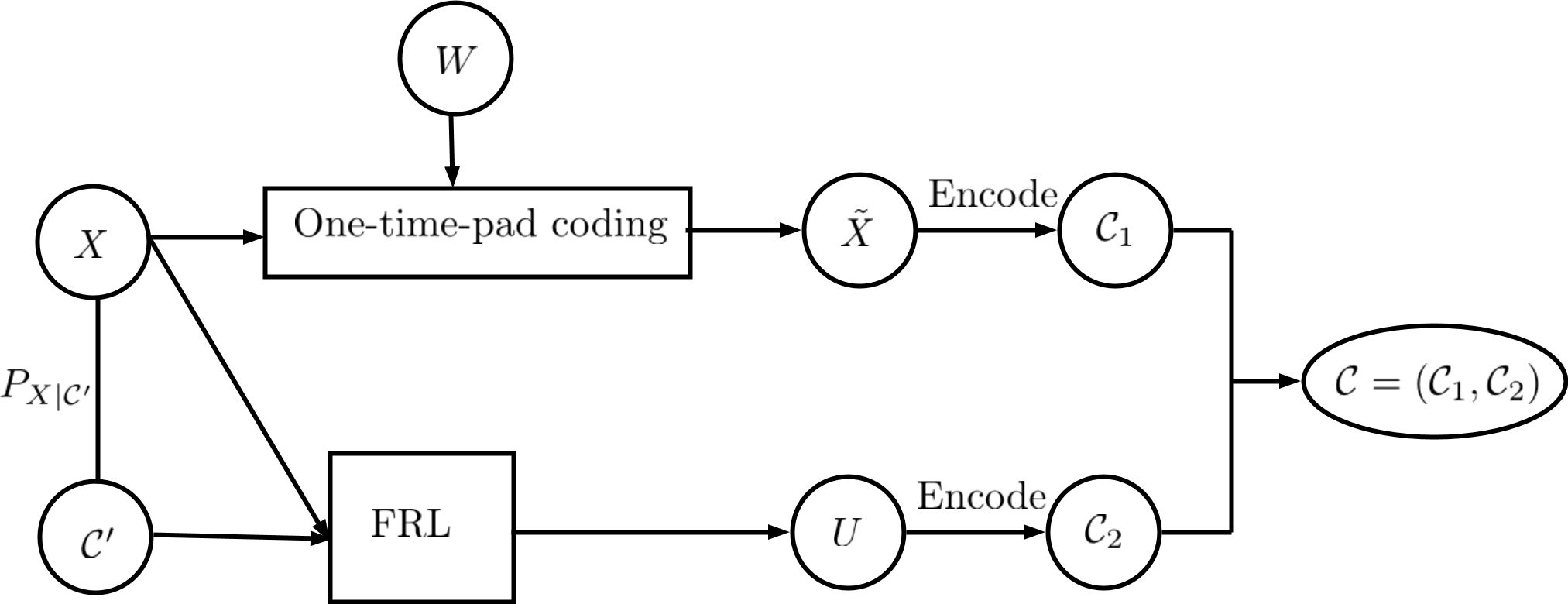}
 	\caption{Encoder design: illustration of the achievability scheme of Theorem \ref{th1}. Two-part code construction is used to produce the response of the server, $\mathcal{C}$. The server sends $\cal C$ over the channel, which is independent of $X$.}
 	\label{achieve}
 \end{figure}
As shown in Fig. \ref{decode}, at user side, each user, e.g., user $i$, first decodes $X$ using one-time-pad decoder. Then, based on \eqref{kharkosde} it decodes $\mathcal{C}'$ using $U$ and $X$. Finally, it decodes $Y_{d_i}$ using local cache $Z_i$ and the response $\mathcal{C}'$.  
\begin{figure}[h]
	\centering
	\includegraphics[scale = .1]{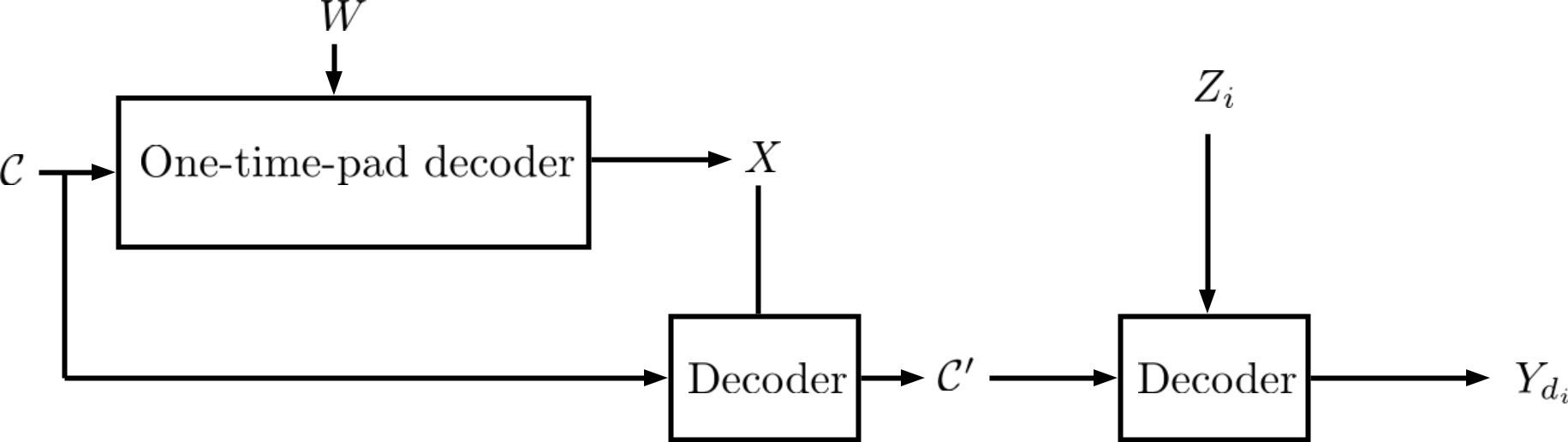}
	\caption{Illustration of the decoding process for the achievability scheme of Theorem \ref{th1}.}
	\label{decode}
\end{figure}
  \end{sproof}
\begin{remark}\label{re}
	\normalfont
	The upper bounds in Theorem~\ref{th1} also hold if the size of the database is countably infinite.  This result follows from the fact that Lemma~\ref{FRL} also applies to the case of a countably infinite $|\mathcal{Y}|$. 
\end{remark}
\begin{remark}
	\normalfont
	Although in Theorem~\ref{th1} we assume that $M\in\{\frac{N}{K},\frac{2N}{K},\ldots,N\}$, the results can be extended for other values of $M$ using the memory sharing technique of \cite[Theorem~1]{maddah1}.  
\end{remark}
 \begin{remark}
 	\normalfont
 	In this work, we assume that the privacy leakage constraint needs to be satisfied in the delivery phase and not in the placement phase. This assumption can be motivated since the placement phase occurs during the off-peak hours and we can assume that the adversary does not listen to the channel during that time, thus, sending coded contents of the database does not violate the privacy criterion. Furthermore, it is assumed that the adversary does not have access to the local cache contents. It only has access to the channel during the peak hours of the network, i.e., delivery phase.
 \end{remark}
\begin{remark}
	\normalfont
	Considering the scenarios in the presence of the adversary during the placement phase, the server can utilize the same strategy as used during the delivery phase. The server can fill the caches using the two-part code construction where we hide the information about $X$ by using the one-time pad coding.
\end{remark}
 Next we study a numerical example to better illustrate the achievable scheme in Theorem~\ref{th1}.
 \begin{example}
 	Let $F=N=K=2$ and $M=1$. Thus, $Y_1=(Y_1^1,Y_1^2)$ and $Y_2=(Y_2^1,Y_2^2)$, where $Y_i^j\in\{0,1\}$ for $i,j\in\{1,2\}$. We assume that $X$ is the pair of first bits of the database, i.e., $X=(Y_1^1,Y_2^1)$, $Y_1$ and $Y_2$ are independent and have the following distributions
 	\begin{align*}
 	P(Y_{1}^1 =Y_{1}^2=0)&=P(Y_{1}^1 =Y_{1}^2=1)=\frac{1}{16},\\
 	P(Y_{1}^1 =1,Y_{1}^2=0)&=P(Y_{1}^1 =0,Y_{1}^2=1)=\frac{7}{16},\\
 	P(Y_{2}^1 =Y_{2}^2=0)&=P(Y_{2}^1 =Y_{2}^2=1)=\frac{1}{10},\\
 	P(Y_{2}^1 =1,Y_{2}^2=0)&=P(Y_{2}^1 =0,Y_{2}^2=1)=\frac{2}{5},
 	\end{align*}
 	In this case, the marginal distributions can be calculated as $P(Y_1^1=1)=P(Y_1^2=1)=P(Y_2^1=1)=P(Y_2^2=1)=\frac{1}{2}$. In the placement phase, the server fills the first local cache by the first bits of the database, i.e., $Z_1=\{Y_1^1,Y_2^1\}$ and the second local cache by the second bits, i.e., $Z_2=\{Y_1^2,Y_2^2\}$. In the delivery phase, assume that users $1$ and $2$ request $Y_1$ and $Y_2$, respectively, i.e., $Y_{d_1}=Y_1$ and $Y_{d_2}=Y_2$. In this case, $\mathcal{C}'=Y_1^2\oplus Y_2^1$, where $\mathcal{C}'$ is the server's response without considering the privacy constraint. Thus, $|\mathcal{C}'|=2$. Next, we encode $X$ using $W$ as follows
 	\begin{align*}
 	\tilde{X}=X+W\ \text{mod}\ 4,
 	\end{align*}
 	where $W\perp X$ is a RV with uniform distribution over $\cal X$. To encode $\tilde{X}$ we use 2 bits. We then encode $\mathcal{C}'$ using Lemma \ref{FRL}. Let $U$ denote the output of the FRL which satisfies $H(U)\leq \sum_{x\in \mathcal{X}} H(\mathcal{C}'|X=x)$ based on the construction. Let $\mathcal{C}_1$ and $\mathcal{C}_2$ describe the encoded $\tilde{X}$ and $U$, respectively. The server sends $\mathcal{C}=(\mathcal{C}_1,\mathcal{C}_2)$ over the shared link. We have
 	\begin{align*}
 	&\sum_{x\in \mathcal{X}} H(\mathcal{C}'|X=x) = H(Y_1^2 \oplus Y_2^1| (Y_1^1,Y_2^1)=(0,0))\\&+H(Y_1^2 \oplus Y_2^1| (Y_1^1,Y_2^1)=(0,1))\\&+H(Y_1^2 \oplus Y_2^1| (Y_1^1,Y_2^1)=(1,0))\\&+H(Y_1^2 \oplus Y_2^1| (Y_1^1,Y_2^1)=(1,1))\\&=2\left(H(Y_1^2|Y_1^1=0)+H(Y_1^2|Y_1^1=1)\right)\\&=4h(\frac{1}{8})\\&=2.1743\ \text{bits}.
 	\end{align*}
 	For this particular demand vector, using \eqref{koonwi}, the average codelength is $5.1743$ bits and by using \eqref{koon2wi} $\ceil{\log(5)}+2=5$ bits are needed. For the request vector $(Y_{d_1},Y_{d_2})=(Y_1,Y_2)$, the average length of the code is $5$ bits to satisfy the zero leakage constraint. Thus, for $(Y_{d_1},Y_{d_2})=(Y_1,Y_2)$, we have
 	\begin{align*}
 	\mathbb{L}(P_{XY},4) \leq 5\ \text{bits}.
 	\end{align*}
 Both users first decode $X$ using $\tilde{X}$ and $W$, then decode $\mathcal{C}'=Y_1^2\oplus Y_2^1$ by using $X$ and $U$, since from FRL $\mathcal{C}'$ is a deterministic function of $U$ and $X$. User $1$ can decode $Y_1^2$ using $=Y_1^2\oplus Y_2^1$ and $Y_1^1$, which is available in the local cache $Z_1$, and user $2$ can decode $Y_2^1$ using $=Y_1^2\oplus Y_2^1$ and $Y_1^2$, which is in $Z_2$. Moreover, we choose $W$ to be independent of $X$ and $U$. As a result, $X$ and $(\tilde{X},U)$ become independent. Thus, $I(\mathcal{C};X)=0$, which means there is no leakage from $X$ to the adversary.
 Next, assume that in the delivery phase both users request $Y_1$, i.e., $Y_{d_1}=Y_{d_2}=Y_1$. In this case, $\mathcal{C}'=Y_1^2\oplus Y_1^1$ with $|\mathcal{C}'|=2$. We have
 \begin{align*}
 &\sum_{x\in \mathcal{X}} H(\mathcal{C}'|X=x) = H(Y_1^2 \oplus Y_1^1| (Y_1^1,Y_2^1)=(0,0))\\&+H(Y_1^2 \oplus Y_1^1| (Y_1^1,Y_2^1)=(0,1))\\&+H(Y_1^2 \oplus Y_1^1| (Y_1^1,Y_2^1)=(1,0))\\&+H(Y_1^2 \oplus Y_1^1| (Y_1^1,Y_2^1)=(1,1))\\&=2\left(H(Y_1^2 \oplus Y_1^1|Y_1^1=0)+H(Y_1^2 \oplus Y_1^1|Y_1^1=1)\right)\\&=2\left(H(Y_1^2|Y_1^1=0)+H(Y_1^2|Y_1^1=1)\right)\\&=4h(\frac{1}{8})=2.1743\ \text{bits}.
 \end{align*}
 Using \eqref{koonwi}, for $Y_{d_1}=Y_{d_2}=Y_1$, the average codelength is $5.1743$ bits and by using \eqref{koon2wi} $\ceil{\log(5)}+2=5$ bits are needed. 
 For the request vector $Y_{d_1}=Y_{d_2}=Y_1$, the average length of the code is $5$ bits to satisfy the zero leakage constraint.
 Next, let $Y_{d_1}=Y_{d_2}=Y_2$. In this case, $\mathcal{C}'=Y_2^1\oplus Y_2^2$. We have
 \begin{align*}
 &\sum_{x\in \mathcal{X}} H(\mathcal{C}'|X=x)\\&=2\left(H(Y_2^2\oplus Y_2^1 |Y_2^1=0)+H(Y_2^2\oplus Y_2^1|Y_2^1=1)\right)\\&=2\left(H(Y_2^2|Y_2^1=0)+H(Y_2^2|Y_2^1=1)\right)\\&=4h(\frac{1}{5})=2.8877\ \text{bits}.
 \end{align*}
 Using \eqref{koonwi}, for $Y_{d_1}=Y_{d_2}=Y_2$, the average codelength is $5.8877$ bits and by using \eqref{koon2wi} $\ceil{\log(5)}+2=5$ bits are needed. 
 For the request vector $Y_{d_1}=Y_{d_2}=Y_1$, the average length of the code is $5$ bits to satisfy the zero leakage constraint. Finally, let $Y_{d_1}=Y_2,\ Y_{d_2}=Y_1$. In this case, $\mathcal{C}'=Y_2^1\oplus Y_1^1$. In this case, since $\mathcal{C}'$ is a function of $X$, we have \eqref{koonwi} leads to zero and it is enough to only send $X$ using on-time pad coding. Thus, for the request vector $Y_{d_1}=Y_2,\ Y_{d_2}=Y_1$, the average length of the code is $2$ bits to satisfy the zero leakage constraint.
 As a results, since the code in \eqref{jojowi} is defined based on the worst case scenario, $5$ bits are needed to be sent over the shared link to satisfy the demands. The worst case scenarios correspond to the combinations $(Y_{d_1},Y_{d_2})=(Y_1,Y_2)$ and $Y_{d_1}=Y_{d_2}=Y_1$, respectively.  
 \end{example}
Next we provide lower bounds on \eqref{main1wi}.
\begin{theorem}\label{conv}
	 For any RVs $(X,Y_1,\ldots,Y_N)$ distributed according to $P_{XY_1\cdot Y_N}$, where $|\mathcal{X}|$ is finite and $|\mathcal{Y}|$ is finite or countably infinite and any shared key size $T\geq 1$ we have
	\begin{align}\label{lower}
	\mathbb{L}(P_{XY},T)\geq L_1(P_{XY}).
	\end{align}
	For and uncoded placement and $N$ independent $Y_i$, we have 
	\begin{align}\label{lower2}
	\mathbb{L}(P_{XY},T)\geq L_2(T,P_{XY}).
	\end{align}
	where 
	\begin{align}
	L_1(P_{XY}) &=\!\!\! \max_{t\in\{1,\ldots K\}}\!\!\left(\max_{x\in\mathcal{X}}H(Y_1,\ldots,Y_{t\floor{\frac{N}{t}}}|X=x)\right),\label{conv2}\\
	L_2(T,P_{XY}) &= \max_{t\in\{1,\ldots K\}}\!\!\left((t-\frac{t}{\floor{\frac{N}{t}}M})F\!-\!\log(T)\right).\label{conv1}
	\end{align}
\end{theorem}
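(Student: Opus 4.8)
Both inequalities are genie‑aided cut‑set converses in the spirit of \cite{maddah1}: I fix an integer $t\in\{1,\dots,K\}$, replay the delivery phase on several demand vectors that share one placement, and let a group of $t$ users jointly reconstruct many files. The only genuinely new element compared with the classical converse is that the perfect‑privacy constraint \eqref{lashwi} must hold \emph{for every} demand vector, which lets me replace the entropy of a transmission by its entropy conditioned on \emph{any} value $X=x$, and hence optimize over $x$ for free. Before starting I would record two elementary facts. (i) Since $(\mathcal{C},\mathcal{D}_1,\dots,\mathcal{D}_K)$ is prefix‑free and \eqref{jojowi} holds for every key value and every demand vector, Kraft's inequality gives $H(\mathcal{C}(Y,W,d))\le\mathbb{E}[\mathbb{L}(\mathcal{C}(Y,W,d))]\le\alpha$ for all $d$. (ii) Since the code is perfectly private for every $d$, \eqref{lashwi} gives $\mathcal{C}(Y,W,d)\perp X$, so $H(\mathcal{C}(Y,W,d)\mid X{=}x)=H(\mathcal{C}(Y,W,d))\le\alpha$ for every $x$ and every $d$; this is the step that upgrades an averaged bound to a $\max_x$ bound.

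\emph{Bound \eqref{lower}.} Fix $t$, put $r=\lfloor N/t\rfloor$, and use the $r$ demand vectors $d^{(1)},\dots,d^{(r)}$ in which user $i\in[t]$ requests $Y_{(j-1)t+i}$ in round $j$ (the other users' demands are irrelevant and may be fixed arbitrarily), so that across the rounds users $1,\dots,t$ jointly demand the distinct files $Y_1,\dots,Y_{tr}$. Keeping the same placement $(Z_1,\dots,Z_K)$ and the same key $W$ throughout, decodability \eqref{choon} makes $(Y_1,\dots,Y_{tr})$ a deterministic function of $(Z_1,\dots,Z_t,\,W,\,\mathcal{C}(Y,W,d^{(1)}),\dots,\mathcal{C}(Y,W,d^{(r)}))$. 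I would then condition on the maximizing $x$ and expand
\begin{align*}
H(Y_1,\dots,Y_{tr}\mid X{=}x)&\le\textstyle\sum_{i=1}^{t}H(Z_i\mid X{=}x)+H(W\mid X{=}x)\\
&\quad+\textstyle\sum_{j=1}^{r}H\big(\mathcal{C}(Y,W,d^{(j)})\mid X{=}x\big),
\end{align*}
and close the argument with fact (ii) on each transmission term, $H(W\mid X{=}x)\le\log T$ on the key, and $H(Z_i\mid X{=}x)\le MF$ on the caches; letting $\alpha\downarrow\mathbb{L}(P_{XY},T)$ and maximizing over $t$ then yields a lower bound of the form \eqref{conv2}.

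\emph{Bound \eqref{lower2}.} Under uncoded placement and with independent $Y_i$, I reuse the same $r$ rounds but now track raw bits. Each of the $tr$ demanded files carries $F$ bits; in round $j$ the only cached bits of user $i$ that can ever be combined with the transmission to recover file $Y_{(j-1)t+i}$ are bits of the $t$ files requested in that round, and since the $tr$ files are pairwise distinct, over the $r$ rounds each user's cache is ``charged'' at most once, for a total of at most $tMF$ bits. Accounting for the key contributing at most $\log T$ bits and for $\mathbb{E}[\mathbb{L}(\mathcal{C}(Y,w,d^{(j)}))]\le\alpha$ per round, one gets $r\cdot tF\le r(\alpha+\log T)+tMF$; solving for $\alpha$ and maximizing over $t$ gives \eqref{conv1}.

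\emph{Main obstacle.} The privacy step (fact (ii)) is short once \eqref{lashwi} is read as a per‑demand constraint; the delicate parts are (a) justifying that replaying the \emph{fixed} encoder $(\mathcal{C},\mathcal{D}_\bullet)$ on distinct demand vectors while freezing the placement and the key is a legitimate thought experiment — it is, because the code by definition must work for every demand vector — and (b) for \eqref{lower2}, arguing carefully that uncoded placement is exactly what makes the ``each cached bit is useful in at most one of the $r$ rounds'' accounting tight, and keeping track of precisely how the $\lfloor N/t\rfloor$ factor, the $tMF$ cache term, and the $\log T$ key term enter the final expressions \eqref{conv2}--\eqref{conv1}.
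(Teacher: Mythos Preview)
Your proposal follows the same genie-aided cut-set route as the paper: fix $t$ users, replay the fixed encoder on $r=\lfloor N/t\rfloor$ demand vectors with a single placement, and exploit perfect privacy to replace $H(\mathcal C)$ by $H(\mathcal C\mid X{=}x)$ for free---this is precisely the step the paper uses (it writes $H(C_1^{r})=H(C_1^{r}\mid X{=}x)$ without comment). The only refinement in the paper that you miss is the handling of the key: rather than bounding $H(W\mid X{=}x)\le\log T$, the paper \emph{conditions} on $W$ throughout, which is free on the file side because $W\perp(X,Y)$ gives $H(Y_1^{tr}\mid W,X{=}x)=H(Y_1^{tr}\mid X{=}x)$; this removes the stray $\log T$ from your $L_1$ chain, and for $L_2$ it means the single shared key contributes $\log T$ once in the cut (the paper simply cites the Maddah-Ali--Niesen inequality $\lfloor N/t\rfloor H(U)+tMF+\log T\ge t\lfloor N/t\rfloor F$), whereas your per-round accounting charges it $r$ times before dividing.
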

\begin{proof}
	The proof is provided in Appendix A.
\end{proof}
In the next example we study the lower bound obtained in \eqref{conv2}.
\begin{example}
	Let $P_{XY_1Y_2}$ follow the same joint distribution as in Example 2. In this case, we have
	\begin{align*}
	L_1(P_{XY})&=\max_{y_1^1,y_2^1} H(Y_1,Y_2|Y_1^1=y_1^1,Y_2^1=y_2^1)\\
	&=\max_{y_1^1,y_2^1} H(Y_1^2|Y_1^1=y_1^1)+H(Y_2^2|Y_2^1=y_2^1)\\
	&=h(\frac{1}{8})+h(\frac{1}{5})\\&=1.2655 \ \text{bits}.
	\end{align*}
\end{example}
\subsection{Second design based on the minimum entropy coupling (Remark 7):}
 In the next result let $Q^*$ achieve the lower bound in Theorem 2 for the following problem 
\begin{align}
H^*(P_{X\mathcal{C}'})=\!\!\!\!\!\inf_{\begin{array}{c} 
	\substack{H(\mathcal{C}'|X,U)=0,\ I(X;U)=0}
	\end{array}}H(U),\label{minent2}
\end{align}
where in \eqref{minent}, $Y$ is substituted by $\mathcal{C}'$, i.e., $\mathcal{C}'\leftarrow Y$, and $\mathcal{C}'$ is as defined in \eqref{cache1}. Using Theorem 2 and Remark 6, $H^*(P_{X\mathcal{C}'})$ can be lower bounded by $H(Q^*)$ and upper bounded by $H(Q^*)+0.53$ when $|X|=2$ and by $H(Q^*)+1.22$ when $|X|>2$. For binary $X$ we have
\begin{align}\label{kiun}
H(Q^*)\leq H^*(P_{X\mathcal{C}'}) \leq H(Q^*)+\frac{\log e}{e},
\end{align}
and for $|X|>2$,
\begin{align}\label{kiun1}
H(Q^*)\leq H^*(P_{X\mathcal{C}'}) \leq H(Q^*)+\frac{1+\log e}{2}.
\end{align}
We emphasize that $Q^*$ that is used in \eqref{kiun} and \eqref{kiun1} is constructed using the greedy approach based on the matrix $P_{\mathcal{C}'|X}$. We use the same $Q^*$ for the following result.
\begin{theorem}\label{th12}
	Let RVs $(X,Y)=(X,Y_1,\ldots,Y_N)$ be distributed according to $P_{XY_1\cdot Y_N}$ supported on alphabets $\mathcal{X}$ and $\mathcal{Y}$, where $|\mathcal{X}|$ and $|\mathcal{Y}|$ are finite, and let the shared secret key size be $|\mathcal{X}|$, i.e., $T=|\mathcal{X}|$. Furthermore, let $M\in\{\frac{N}{K},\frac{2N}{K},\ldots,N\}$. Let $|X|=2$, we have
	\begin{align}
	\mathbb{L}(P_{XY},2)\leq \!H(Q^*)\!+\frac{\log e}{e}+\!2,\label{koonwie}
	\end{align}
	where $\mathcal{C}'$ is as defined in \eqref{cache1}. When $|X|>2$, we have
	\begin{align}
	\mathbb{L}(P_{XY},|\mathcal{X}|)\leq \!H(Q^*)\!+\frac{1+\log e}{2}+\!1+\!\ceil{\log (|\mathcal{X}|)},\label{koonwi2e}
	\end{align}
\end{theorem}
\begin{sproof}
	The complete proof is provided in Appendix~A.
	The proof is similar to Section \ref{ty}. As shown in Fig. \ref{achieve2} the main difference is to use minimum entropy output of \eqref{minent2} instead of FRL that is used in two-part construction coding in Section \ref{ty}. As shown in Fig. \ref{decode2}, at user side, the decoding process is similar to Section \ref{ty}.
	\begin{figure}[h]
		\centering
		\includegraphics[scale = .1]{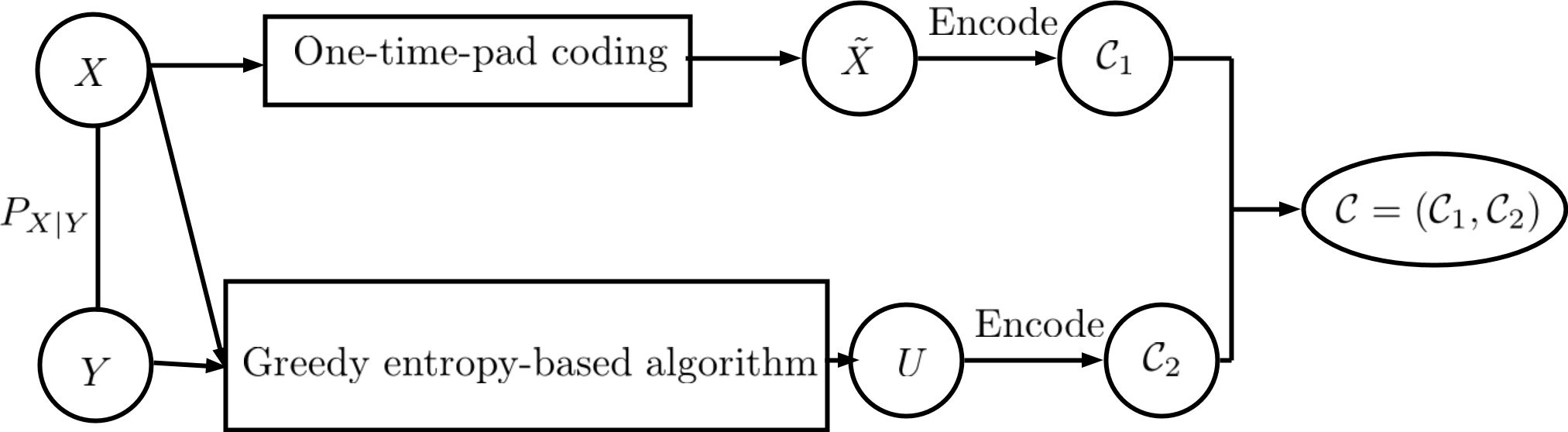}
		\caption{Encoder design: illustration of the achievability scheme of Theorem \ref{th12}. Two-part code construction is used to produce the response of the server, $\mathcal{C}$. The server sends $\cal C$ over the channel, which is independent of $X$.}
		\label{achieve2}
	\end{figure}  
	\begin{figure}[h]
		\centering
		\includegraphics[scale = .1]{decoder.jpg}
		\caption{Illustration of the decoding process for the achievability scheme of Theorem \ref{th12}.}
		\label{decode2}
	\end{figure}
\end{sproof}
\begin{remark}
	\normalfont
	As we mentioned earlier, the main difference between the present scheme and the achievability in Section \ref{ty} is to use greedy entropy-based algorithm which aims to minimize the output of FRL and is optimal within a constant gap.  
\end{remark}
\begin{remark}
	\normalfont
	The complexity of the algorithm to find $Q^*$ in Theorem \ref{th12} is linear in $|\mathcal{C}'|\times|\mathcal{X}|$. This can be shown by using \cite[Lemma 3]{shkel2023information}.
\end{remark}
\begin{remark}
	\normalfont
	Similar to Section \ref{ty}, we assume that the privacy leakage constraint needs to be fulfilled in the delivery phase. This assumption can be motivated since the placement phase occurs during the off-peak hours and we can assume that the adversary does not listen to the channel during that time. Considering the scenarios in the presence of the adversary during the placement phase, the server can employ the same strategy as used during the delivery phase. The server can fill the caches using the two-part code construction coding. 
\end{remark}
Next we study Example 2 to better illustrate the achievable scheme in Theorem~\ref{th12} and compare it with Theorem~\ref{th1}.
\begin{example}
	Let $P_{XY_1Y_2}$ follow the same joint distribution as in Example 2.
	In this case, the marginal distributions can be calculated as $P(Y_1^1=1)=P(Y_1^2=1)=P(Y_2^1=1)=P(Y_2^2=1)=\frac{1}{2}$. In the placement phase, the server fills the first local cache by the first bits of the database, i.e., $Z_1=\{Y_1^1,Y_2^1\}$ and the second local cache by the second bits, i.e., $Z_2=\{Y_1^2,Y_2^2\}$. In the delivery phase, assume that users $1$ and $2$ request $Y_1$ and $Y_2$, respectively, i.e., $Y_{d_1}=Y_1$ and $Y_{d_2}=Y_2$. In this case, $\mathcal{C}'=Y_1^2\oplus Y_2^1$, where $\mathcal{C}'$ is the server's response without considering the privacy constraint. Thus, $|\mathcal{C}'|=2$ and 
	\begin{align*}
	P_{\mathcal{C}'|X}=\begin{bmatrix}
	\frac{7}{8} & \frac{1}{8} & \frac{1}{8} & \frac{7}{8}\\
	\frac{1}{8} & \frac{7}{8} & \frac{7}{8} & \frac{1}{8}
	\end{bmatrix}
	\end{align*}
	Moreover, $Q^*$ has the following distribution $P_{Q^*}=[\frac{7}{8}, \frac{1}{8}]$, hence, $H(Q^*)=h(1/8)=0.5436$.
	Next, we encode $X$ using $W$ as follows
	\begin{align*}
	\tilde{X}=X+W\ \text{mod}\ 4,
	\end{align*}
	where $W\perp X$ is a RV with uniform distribution over $\cal X$. To encode $\tilde{X}$ we use 2 bits. We then encode $\mathcal{C}'$ using greedy entropy-based algorithm. Let $U$ denote the output of the algorithm which satisfies \eqref{kiun3}. Let $\mathcal{C}_1$ and $\mathcal{C}_2$ describe the encoded $\tilde{X}$ and $U$, respectively. The server sends $\mathcal{C}=(\mathcal{C}_1,\mathcal{C}_2)$ over the shared link.
	For this particular demand vector, using \eqref{koonwi2e}, the average codelength is $4.7636$ bits. For the request vector $(Y_{d_1},Y_{d_2})=(Y_1,Y_2)$, the average length of the code is $4.7636$ bits to satisfy the zero leakage constraint. Thus, for $(Y_{d_1},Y_{d_2})=(Y_1,Y_2)$, we have
	\begin{align*}
	\mathbb{L}(P_{XY},4) \leq 4.7636\ \text{bits}.
	\end{align*}
	Using Example 2, for this particular demand vector we need $5$ bits.
	Both users first decode $X$ using $\tilde{X}$ and $W$, then decode $\mathcal{C}'=Y_1^2\oplus Y_2^1$ by using $X$ and $U$, since from FRL $\mathcal{C}'$ is a deterministic function of $U$ and $X$. User $1$ can decode $Y_1^2$ using $=Y_1^2\oplus Y_2^1$ and $Y_1^1$, which is available in the local cache $Z_1$, and user $2$ can decode $Y_2^1$ using $=Y_1^2\oplus Y_2^1$ and $Y_1^2$, which is in $Z_2$. Moreover, we choose $W$ to be independent of $X$ and $U$. As a result, $X$ and $(\tilde{X},U)$ become independent. Thus, $I(\mathcal{C};X)=0$, which means there is no leakage from $X$ to the adversary. 
	Next, assume that in the delivery phase both users request $Y_1$, i.e., $Y_{d_1}=Y_{d_2}=Y_1$. In this case, $\mathcal{C}'=Y_1^2\oplus Y_1^1$ with $|\mathcal{C}'|=2$. Using the same arguments we need $4.7636$ bits.
	Next, let $Y_{d_1}=Y_{d_2}=Y_2$. In this case, $\mathcal{C}'=Y_2^1\oplus Y_2^2$. In this case, $H(Q^*)=h(1/5)=0.7219$ and we need $4.9419$ bits. 
	Finally, let $Y_{d_1}=Y_2,\ Y_{d_2}=Y_1$. In this case, $\mathcal{C}'=Y_2^1\oplus Y_1^1$. Since $\mathcal{C}'$ is a function of $X$ it is enough to only send $X$ using on-time pad coding. Thus, for the request vector $Y_{d_1}=Y_2,\ Y_{d_2}=Y_1$, the average length of the code is $2$ bits to satisfy the zero leakage constraint.
	We conclude that in all cases we need less bits to send compared to Example 2, since by using Example 2 we need 5 bits on average to send over the channel.   
\end{example}
\subsection{Third design in a special case: improving the bounds using the common information concept}
In this section, we improve the bounds obtained in Theorem \ref{th1} considering a special case. To do so, let us recall the privacy mechanism design problems considered in \cite{shah} with zero leakage as follows
\begin{align}
g_{0}(P_{XY})&=\max_{\begin{array}{c} 
	\substack{P_{U|Y}:X-Y-U\\ \ I(U;X)=0,}
	\end{array}}I(Y;U),\label{maing}\\
h_{0}(P_{XY})&=\max_{\begin{array}{c} 
	\substack{P_{U|Y,X}: I(U;X)=0,}
	\end{array}}I(Y;U).\label{mainh}
\end{align} 
Finally, we define a set of joint distributions $\hat{\mathcal{P}}_{XY}$ as follows
\begin{align}
\hat{\mathcal{P}}_{XY}\triangleq \{P_{XY}:g_{0}(P_{XY})=h_{0}(P_{XY})\}.
\end{align}
As outlined in \cite[Lemma 1]{zero}, a sufficient condition to have $g_{0}(P_{XY})=h_{0}(P_{XY})$ is to have $C(X;Y)=I(X;Y)$, where $C(X,Y)$ denotes the common information between $X$ and $Y$, where common information corresponds to the Wyner \cite{wyner} or G{\'a}cs-K{\"o}rner \cite{gacs1973common} notions of common information. Moreover, a sufficient condition for $C(X;Y)=I(X;Y)$ is to let $X$ be a deterministic function of $Y$ or $Y$ be a deterministic function of $X$. In both cases, $$C(X;Y)=I(X;Y),$$ and $$g_{0}(P_{XY})=h_{0}(P_{XY}).$$ For more detail see \cite[Proposition 6]{shah}. Moreover, in \cite[Lemma 2]{zero}, properties of the optimizers for $g_{0}(P_{XY})$ and $h_{0}(P_{XY})$ are obtained considering $P_{XY}\in \hat{\mathcal{P}}_{XY}$. It has been shown that the optimizer $U^*$ achieving $g_{0}(P_{XY})=h_{0}(P_{XY})$ satisfies
\begin{align}
H(Y|U^*,X)=0,\label{2}\\
I(X;U^*|Y)=0,\label{3}\\
I(X;U^*)=0.\label{4}
\end{align} 
Next, we recall the definitions of a set $\mathcal{U}^1(P_{XY})$ and a function $\mathcal{K}(P_{XY})$ in \cite{zero} as follows. 
\begin{align}
\mathcal{U}^1(P_{XY})&\triangleq \{U: U\ \text{satisfies \eqref{2}, \eqref{3}, \eqref{4}}\}\\
\mathcal{K}(P_{XY})&\triangleq \min_{U\in \mathcal{U}^1(P_{XY})} H(U).
\end{align} 
Noting that the function $\mathcal{K}(P_{XY})$ finds the minimum entropy of all optimizers satisfying $g_0(P_{XY})=h_0(P_{XY})$. A simple bound on $\mathcal{K}(P_{XY})$ has been obtained in \cite[Lemma 4]{zero}. Next, we define
\begin{align}
A_{XY}&\triangleq \begin{bmatrix}
&P_{y_1}-P_{y_1|x_1} &\ldots & P_{y_{|\mathcal{Y}|}}-P_{y_{q}|x_1}\\
&\cdot &\ldots &\cdot\\
&P_{y_1}-P_{y_1|x_{t}} &\ldots & P_{y_{q}}-P_{y_{q}|x_{t}}
\end{bmatrix}\!\in\! \mathbb{R}^{t\times q},\\
b_{XY}&\triangleq \begin{bmatrix}
H(Y|x_1)-H(Y|X) \\
\cdot \\
H(Y|x_t)-H(Y|X)
\end{bmatrix}\in\mathbb{R}^{t},\ \bm{a}\triangleq \begin{bmatrix}
a_1 \\
\cdot \\
a_q
\end{bmatrix}\in\mathbb{R}^{q}.
\end{align}
where $t=|\mathcal{X}|$ and $q=|\mathcal{Y}|$. Noting that in \cite[Theorem 1]{zero}, bounds on $\mathcal{K}(P_{XY})$ and entropy of any $U\in\mathcal{U}^1$ have been obtained. Specifically, when $\text{rank}(A_{XY})=|\mathcal{Y}|$, the exact value of $\mathcal{K}(P_{XY})$ is obtained by solving simple linear equations in \cite[eq. (26)]{zero}. 
We emphasize that by using \cite{borz}, $g_0(P_{XY})$ can be obtained by solving a linear program in which the size of the matrix in the system of linear equations is at most $|\mathcal{Y}|\times\binom{|\mathcal{Y}|}{\text{rank}(P_{X|Y})}$ with at most $\binom{|\mathcal{Y}|}{\text{rank}(P_{X|Y})}$ variables. By solving the linear program as proposed in \cite{borz} we can find the exact value of $\mathcal{K}(P_{XY})$ and the joint distribution $P_{U|YX}$ that achieves it. The complexity of the linear program in \cite{borz} can grow faster than exponential functions with respect to $|\mathcal{Y}|$, however the complexity of the proposed method in \cite{zero} grows linearly with $|\mathcal{Y}|$. Thus, our proposed upper bound has less complexity compared to the solution in \cite{borz}.
The bounds on $\mathcal{K}(P_{XY})$ help us to obtain the next result. Next, we improve the bounds obtained in Theorem \ref{th1}.
\begin{theorem}\label{loo}
	Let RVs $(X,Y)=(X,Y_1,\ldots,Y_N)$ be distributed according to $P_{XY_1\cdot Y_N}$ supported on alphabets $\mathcal{X}$ and $\mathcal{Y}$, where $|\mathcal{X}|$ and $|\mathcal{Y}|$ are finite, and let the shared secret key size be $|\mathcal{X}|$, i.e., $T=|\mathcal{X}|$. Furthermore, let $M\in\{\frac{N}{K},\frac{2N}{K},\ldots,N\}$. Let $P_{X\mathcal{C}'}\in\hat{\mathcal{P}}_{X\mathcal{C}'}$ and let $q=|\mathcal{C}'|$ and $\beta=\log(\text{null}(P_{X|\mathcal{C}'})+1)$, where $\mathcal{C}'$ is defined in \eqref{cache1}. Then, we have 
	\begin{align}
	&\mathbb{L}(P_{X\mathcal{C}'},|\mathcal{X}|)
	\leq \mathcal{K}(P_{X\mathcal{C}'})+1+\ceil{\log(|\mathcal{X}|)} \label{log}\\
	&\leq H(\mathcal{C}'|X)\!\!+\!\!\!\!\!\!\!\!\!\!\!\!\!\!\!\!\!\!\!\!\max_{\begin{array}{c}
		\substack{a_i:A_{XY}\bm{a}=b_{XY},\bm{a}\geq 0,\\
			\sum_{i=1}^{q}\! P_{c'_i}a_i\leq \beta-H(\mathcal{C}'|X)} \end{array} }\!\!\sum_{i=1}^{q} \!\!P_{c'_i}a_i\!+\!1\!+\!\ceil{\log(|\mathcal{X}|)}\label{ass}\\&\leq \beta+1+\!\ceil{\log(|\mathcal{X}|)},\label{mass}
	\end{align}
	where $c'_i$ is the $i$-th element (alphabet) of $\mathcal{C}'$.
	Moreover, we have
	\begin{align}
	&\mathbb{L}(P_{XY},2)\leq \!H(Q^*)\!+\frac{\log e}{e}+\!2,\label{koonwi11}\\
	&\mathbb{L}(P_{XY},|\mathcal{X}|)\leq \!H(Q^*)\!+\frac{1+\log e}{2}+\!1+\!\ceil{\log (|\mathcal{X}|)}\label{koonwi22},
	\end{align}
	where $Q^*$ is defined in Theorem \ref{th1}.
	Finally, for any $P_{X\mathcal{C}'}$ (not necessarily $P_{X\mathcal{C}'}\in\hat{\mathcal{P}}_{X\mathcal{C}'}$) with $|\mathcal{C}'|\leq |\mathcal{X}|$ we have
	\begin{align}
	\mathbb{L}(P_{X\mathcal{C}'},|\mathcal{C}'|)\leq \ceil{\log{|\mathcal{C}'|}}.\label{kos3} 
	\end{align}
\end{theorem}
\begin{sproof}
	The complete proof is provided in Appendix A.
	The proof is based on two-part construction coding and is similar to Theorem \ref{th1} and \cite[Theorem 2]{zero}. As shown in Fig. \ref{kesh11}, to achieve \eqref{log}, we use the solution to $h_0(P_{X\mathcal{C}'})=g_0(P_{X\mathcal{C}'})$ instead of the greedy entropy-based algorithm. 
\end{sproof}
\begin{remark}
	Clearly, the upper bound obtained in \eqref{kos3} improves the bounds in Theorem \ref{th1}. Since, when $|\mathcal{C}'|\leq |\mathcal{X}|$ we have
	\begin{align}
	\ceil{\log{|\mathcal{C}'|}}\leq \!H(Q^*)\!+\frac{1+\log e}{2}+\!1+\!\ceil{\log (|\mathcal{X}|)}. 
	\end{align} 
\end{remark}
\begin{figure}[]
	\centering
	\includegraphics[scale = .12]{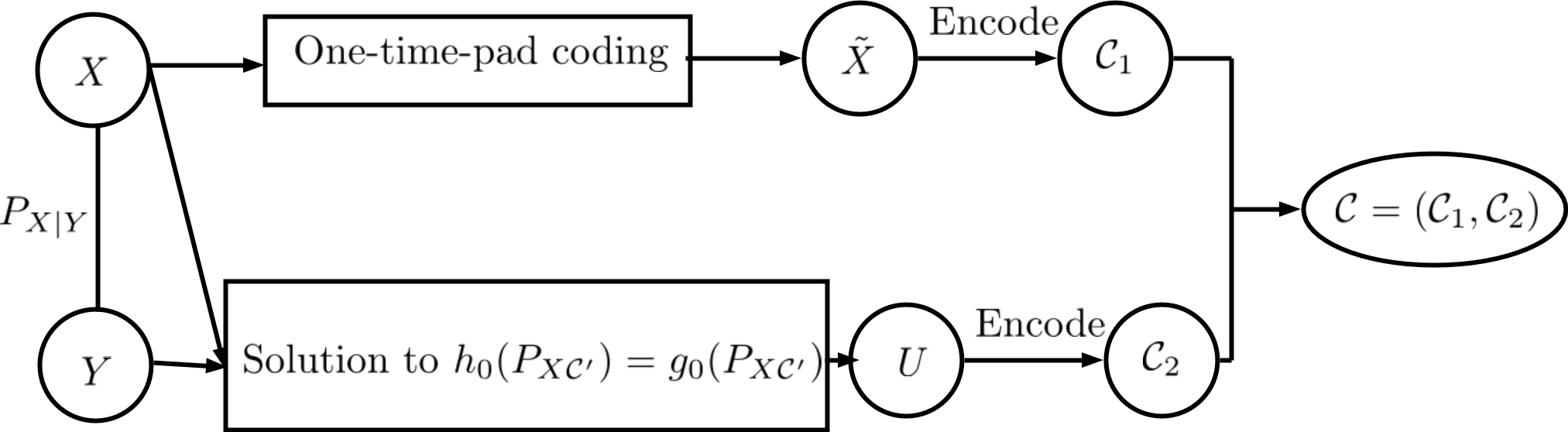}
	\caption{In this work, we use two-part construction coding strategy to send codewords over the channels. We hide the information of $X$ using one-time-pad coding and we then use the solution of $h_0(P_{X\mathcal{C}'})=g_0(P_{X\mathcal{C}'})$ to construct $U$.} 
	\label{kesh11}
\end{figure}
Next, we provide a numerical example that shows \eqref{log} improves \eqref{koonwie}.
\begin{example}
	Let \begin{align*}
P_{X|\mathcal{C}'}=\begin{bmatrix}
1 &1 &1 &0 &0 &0\\
0 &0 &0 &1 &1 &1
\end{bmatrix}
	\end{align*} and $$P_{\mathcal{C}'}=[\frac{1}{8},\frac{2}{8},\frac{3}{8},\frac{1}{8},\frac{1}{16},\frac{1}{16} ].$$ Clearly, in this case $X$ is a deterministic function of $\mathcal{C}'$. Using the linear program proposed in \cite{borz}, we obtain a solution as \begin{align*}P_{\mathcal{C}'|u_1}&=[0.75,0,0,0.25,0,0],\\ P_{\mathcal{C}'|u_2}&=[0,0.75,0,0.25,0,0],\\ P_{\mathcal{C}'|u_3}&=[0,0,0.75,0,0.25,0],\\ P_{\mathcal{C}'|u_4}&=[0,0,0.75,0,0,0.25],\end{align*} and $$P_U=[\frac{1}{6},\frac{1}{3},\frac{1}{4},\frac{1}{4} ],$$ which results $H(U)=1.9591$ bits. We have $H(U)=\mathcal{K}(P_{XY})\leq 1.9591$. Moreover, we have
	\begin{align*}
	P_{\mathcal{C}'|X}=\begin{bmatrix}
	\frac{1}{6} &0\\ \frac{1}{3} &0\\ \frac{1}{2} &0\\ 0 &\frac{1}{2}\\0 &\frac{1}{4}\\0 &\frac{1}{4}
	\end{bmatrix}.
	\end{align*}
	Using the greedy search algorithm we have $$P_{Q*}=[\frac{1}{2}\ \frac{1}{4}\ \frac{1}{6}\ \frac{1}{12}],$$ hence, $H(Q^*)=1.7296$. Thus, 
	\begin{align*}
	H(Q^*)+\frac{\log e}{e}&=2.2596\\&\geq \mathcal{K}(P_{XY})\\&=1.9591.
	\end{align*}
\end{example} 
\section{Application: Cache-Aided networks with limited encoder buffer}\label{st}
\begin{figure}[]
	\centering
	\includegraphics[scale = .12]{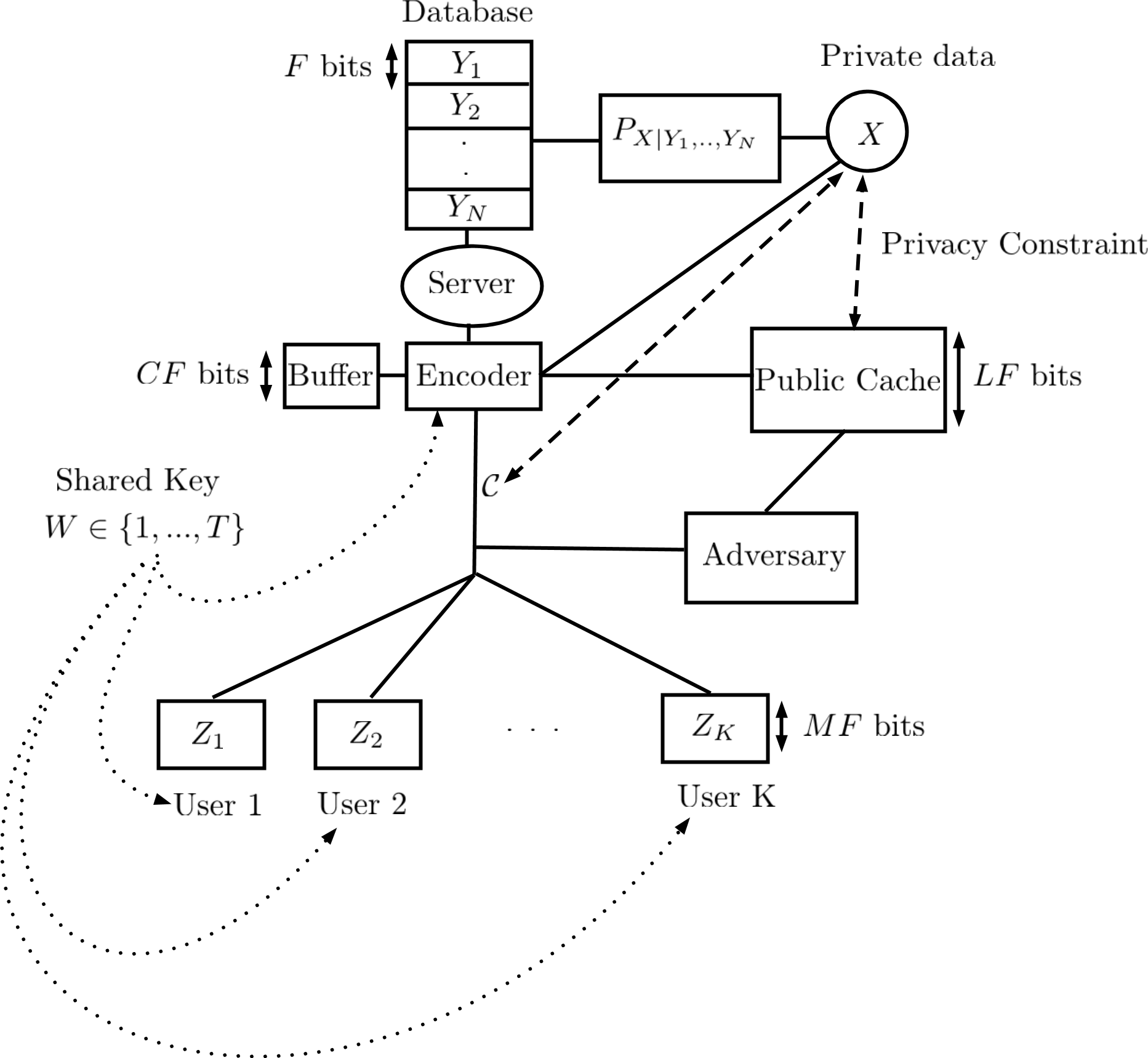}
	\caption{In this work, a server wants to send a response over a shared link to satisfy users' demands, but since the database is correlated with the private data existing schemes are not applicable. In this model, the server sends the response to an encoder and the encoder sequentially encodes the message $\mathcal{C}$ and transmits it.}
	\label{application}
\end{figure}
In this section, we present an application that considers an encoder equipped by a local buffer with limited size.
We consider the scenario depicted in Fig. \ref{application}, in which a server has access to a database containing $N$ files, denoted by $Y_1, \dots, Y_N$. Each file, of size $F$ bits, is generated according to the joint distribution $P_{X Y_1 \dots Y_N}$, where $X$ represents a private latent variable. The server communicates with an encoder equipped with a local buffer of capacity $CF$ bits and also has access to a public cache of size $LF$ bits. The encoder is connected to $K$ users via a shared link, with user $i$ having access to a local cache of size $MF$ bits. Additionally, the encoder and the users share a common key $W$ with size denoted by $T$, i.e., $W\in\{1,\ldots,T\}$.

The system operates in two distinct phases: the placement phase and the delivery phase. During the placement phase, the server and encoder fill the local caches using data from the database. Let $Z_k$ denote the cached content at user $k$, for $k \in [K]$, following this phase. In the subsequent delivery phase, users communicate their demands to the encoder and server, where $d_k \in [N]$ denotes the demand from user $k$. To fulfill these demands, the server transmits a response $\mathcal{C}' = (\mathcal{C}'_1, \mathcal{C}'_2, \dots)$ to the encoder, structured into blocks, each of size $CF$ bits. The encoder then constructs a message $\mathcal{C}_i$ from the $i$-th block $\mathcal{C}'_i$ of the server's response and the shared key, and transmits it to users via the shared link. The encoder can also store message $\mathcal{C}_i$ in the public cache, which is inaccessible to the users. This public cache allows the encoder to use previously stored messages $(\mathcal{C}1, \dots, \mathcal{C}{i-1})$ when designing subsequent messages $\mathcal{C}_i$.

We assume an adversary can observe both the shared communication link and the public cache, attempting to extract information about the private latent variable $X$ from the transmitted messages $\mathcal{C} = (\mathcal{C}_1, \mathcal{C}_2, \dots)$ and the content of the public cache $P$. Importantly, the adversary does not have access to the users' caches or the shared key. Since the files in the database are correlated with $X$, conventional coded caching and delivery schemes, such as those introduced in \cite{maddah1}, do not satisfy the necessary privacy constraints. The objective in this cache-aided private delivery scenario is to design the response $\mathcal{C}$ with the minimal average length possible, subject to both the privacy constraint and the users' zero-error decoding requirements. Additionally, the public cache content $P$ itself must satisfy the specified privacy constraint. We address the scenario involving worst-case demand patterns $d = (d_1, \dots, d_K)$ to build the message $\mathcal{C}$, with expectation taken over the randomness inherent in the database.

The output of the server $\mathcal{C}'(Y,d_1,\ldots,d_K)$ is the codeword the server sends to the encoder by using blocks of $CF$ bits to satisfy the demands of users $(Y_{d_1},\ldots,Y_{d_K})$. Let $\mathcal{C}'_i$ be the $i$-th block of the codeword $\mathcal {C}'$. 
Due to the limited size of the local buffer $C$, the encoder uses a sequentially coding scheme as follows. 
First, the encoder receives $\mathcal{C}'_1$ and encode it using the shared key. Let the output be $\mathcal{C}_1$. The encoder sends $\mathcal{C}_1$ over the shared link and also stores it to the public cache. Then, the encoder receives the second block $\mathcal{C}'_2$ and encodes it using $W$ and $\mathcal{C}_1$. The output is denoted by $\mathcal{C}_2$ and is sent through the shared link and is stored to the public cache.  
Here, we assume that $L$ is a large quantity. Furthermore, the encoder encodes $i$-th block $\mathcal{C}'_i$ using $W$ and $(\mathcal{C}_1,\ldots,\mathcal{C}_{i-1})$. The encoding scheme continues until the last block of the codeword $\mathcal {C}'$. Let $\mathcal{C}=(\mathcal{C}_1,\mathcal{C}_2,\ldots)$ denote the delivered message consisting of the messages sent over the shared link.  
At the user side, user $k$ employs the decoding function $\mathcal{D}_k$ to recover its demand $Y_{d_k}$ without loss, i.e., $\hat{Y}_{d_k}=\mathcal{D}_k(Z_k,W,\mathcal{C},d_1,\ldots,d_K)$ and
\begin{align*}
\mathbb{P}(\mathcal{D}_k(\mathcal{C},W,Z_k,d_1,\ldots,d_K)\!=\!Y_{d_k})\!=\!1.
\end{align*}
This problem can be studied, and similar coding schemes can be applied. We leave this as a topic for future research.

 \section{conclusion}\label{concul}
 We have introduced a cache-aided compression problem with a perfect privacy constraint, where the information delivered over the shared link during the delivery phase is independent of underlying variable $X$ that is correlated with the files in the database that can be requested by the users. 
 We propose an achievable scheme using two-part code construction that benefits from the  
 Functional Representation Lemma to encode the second part of the code, which is constructed based on the specific user demands. For the first part of the code we hide the private data using one-time pad coding. We improved the obtained bounds by using greedy entropy-based algorithm and proposed a third achievability scheme considering two cases using common information concepts. The main reason the greedy algorithm improves the results is that it builds a random variable with minimum entropy while satisfying the constraints imposed by the Functional Representation Lemma (FRL). We have shown that the entropy-based algorithm can improve upon the code constructed using the Functional Representation Lemma (FRL), but at the cost of increased complexity. Finally, we have shown that the code based on common information can outperform the other two schemes in different scenarios.
\section*{Appendix A}
\subsection*{Proof of Theorem~\ref{th1}:}
	In the placement phase we use the same scheme as in \cite{maddah1}. In the delivery phase, we use the following strategy. Let $W$ be the shared secret key with key size $T=|\mathcal{X}|$, which is uniformly distributed over $\{1,,\ldots,T\}=\{1,\ldots,|\mathcal{X}|\}$ and independent of $(X,Y_1,\ldots,Y_N)$. As shown in Fig. \ref{achieve}, first, the private data $X$ is encoded using the shared secret key \cite[Lemma~1]{kostala2}. Thus, we have
\begin{align*}
\tilde{X}=X+W\ \text{mod}\ |\mathcal{X}|.
\end{align*}
Next, we show that $\tilde{X}$ has uniform distribution over $\{1,\ldots,|\mathcal{X}|\}$ and $I(X;\tilde{X})=0$. We have
\begin{align}\label{t}
H(\tilde{X}|X)\!=\!H(X\!+\!W|X)\!=\!H(W|X)\!=\!H(W)\!=\! \log(|\mathcal{X}|).
\end{align}
Furthermore, $H(\tilde{X}|X)\leq H(\tilde{X})$, and combining it with \eqref{t}, we obtain $$H(\tilde{X}|X)= H(\tilde{X})=\log(|\mathcal{X}|).$$ For encoding $\tilde{X}$ we use $\ceil{\log(|\mathcal{X}|)}$ bits. Let $\mathcal{C}'$ be the response that the server sends in the delivery phase for the scheme proposed in \cite[Theorem 1]{maddah1}, also given in \eqref{cache1}. We produce $U$ based on the construction proposed in \cite[Lemma~1]{kostala}, where we use $Y\leftarrow \mathcal{C}'$ and $X \leftarrow X$ in the FRL. Thus, 
\begin{align}
H(\mathcal{C}'|X,U)&=0,\label{kharkosde1}\\
I(U;X)&=0.\label{koonnane}
\end{align} 
Since the construction of $U$ is based on Lemma~\ref{aghabwi}, we have
\begin{align}
H(U)\leq \sum_x H(\mathcal{C}'|X=x)+1.
\end{align}
We encode $U$ and $\tilde{X}$ and denote them by $\mathcal{C}_2$ and $\mathcal{C}_1$, which have average lengths at most $\sum_x H(\mathcal{C}'|X=x)+1$ and $\ceil{\log(|\mathcal{X}|)}$, respectively. To encode $U$ we use any traditional lossless codes. Note that both $\mathcal{C}_2$ and $\mathcal{C}_1$ are lossless and variable length codes. Considering $\mathcal{C}=(\mathcal{C}_1,\mathcal{C}_2)$ we have
\begin{align}
\mathbb{L}(P_{XY},|\mathcal{X}|)\leq \sum_{x\in\mathcal{X}}H(\mathcal{C}'|X=x)+\!1+\!\ceil{\log (|\mathcal{X}|)}.
\end{align}
Next, we present the decoding strategy at the user side. Since $W$ is shared with each user, user $i$ decodes $X$ by using $\mathcal{C}_1$ and $W$, i.e., by using one-time-pad decoder. By adding $|\mathcal{X}|-W$ to $\tilde{X}$ we obtain $X$. Then, based on \eqref{kharkosde1} user $i$ decodes $\mathcal{C}'$ using $X$ and $U$. Finally, user $i$ decodes $Y_{d_i}$ using $\mathcal{C}'$ and its local cache contents $Z_i$, for more details see \cite[Theorem 1]{maddah1}.
What remains to be shown is the leakage constraint. We choose $W$ independent of $(X,U)$. We have
\begin{align*}
I(\mathcal{C};X)&=I(\mathcal{C}_1,\mathcal{C}_2;X)\\&=I(U,\tilde{X};X)\\&=I(U;X)+I(\tilde{X};X|U)\\&=I(\tilde{X};X|U)\\&\stackrel{(a)}{=}H(\tilde{X}|U)-H(\tilde{X}|X,U)\\&=H(\tilde{X}|U)-H(X+W|X,U)\\&\stackrel{(b)}{=}H(\tilde{X}|U)-H(W)\\&\stackrel{(c)}{=}H(\tilde{X})-H(W)\\&=\log(|\mathcal{X}|)\!-\!\log(|\mathcal{X}|)\\&=0
\end{align*}
where (a) follows from \eqref{koonnane}; (b) follows since $W$ is independent of $(X,U)$; and (c) from the independence of $U$ and $\tilde{X}$. The latter follows since we have
\begin{align*}
0&\leq I(\tilde{X};X|U) \\&= H(\tilde{X}|U)-H(W)\\&\stackrel{(i)}{=}H(\tilde{X}|U)-H(\tilde{X})\leq 0.
\end{align*}
Thus, $\tilde{X}$ and $U$ are independent. Step (i) above follows by the fact that $W$ and $\tilde{X}$ are uniformly distributed over $\{1,\ldots,|\mathcal{X}|\}$, i.e., $H(W)=H(\tilde{X})$. As a summary, if we choose $W$ independent of $(X,U)$ the leakage to the adversary is zero.
	To prove \eqref{koon2wi} and \eqref{gohwi} we use the same code, and note that when $|\mathcal{C}'|$ is finite we can use the bound in \eqref{prwi}, which results in \eqref{koon2wi}. Furthermore, when $X$ is a deterministic function of $\mathcal{C}'$ we can use the bound in \eqref{provewi} that leads to \eqref{gohwi}.
	\subsection*{Proof of Theorem~\ref{conv}:}
	To obtain \eqref{lower2}, let $U$ satisfy
	\begin{align}
	H(Y_{d_k}|U,W,Z_k)=0,\ \forall k\in[K].
	\end{align}
	For such a $U$ We have 
	\begin{align*}
	\mathbb{L}(P_{XY},T)&\geq \min_{U:I(U;X)=0} H(U) \geq \min H(U).
	\end{align*}
	Using the cutset bound in \cite[Theorem 2]{maddah1} we obtain
	\begin{align}
	\floor{\frac{N}{t}}H(U)+tMF+\log{T}\geq t\floor{\frac{N}{t}}F.
	\end{align}
	To obtain \eqref{lower2}, consider the first $t$ users with their local caches $Z_1,\ldots,Z_t$, let $C_i,\ i\in[\floor{\frac{N}{t}}]$, be the response to the demands $Y_{it+1},\ldots,Y_{it+t}$ of those users. Considering all responses $C_i,\ i\in[\floor{\frac{N}{t}}]$, and demands $Y_1,\ldots, Y_{t\floor{\frac{N}{t}}}\triangleq Y_1^{t\floor{\frac{N}{t}}}$ we have
	\begin{align*}
	&H(C_{1}^{\floor{\frac{N}{t}}})+sMF\geq H(C_{1}^{\floor{\frac{N}{t}}})+H(Z_{1}^{t}|C_{1}^{\floor{\frac{N}{t}}},X=x)\\&=H(C_{1}^{\floor{\frac{N}{t}}}|X=x)+H(Z_{1}^{t}|C_{1}^{\floor{\frac{N}{t}}},X=x)\\&=
	\!H(C_{1}^{\floor{\frac{N}{t}}}\!,Z_{1}^{t}|W,X=x)\\&\stackrel{(a)}{=}\! H(Y_1^{t\floor{\frac{N}{t}}},C_{1}^{\floor{\frac{N}{t}}}\!,Z_{1}^{t}|W,X=x)\\ &\stackrel{(b)}{\geq} H(Y_1^{t\floor{\frac{N}{t}}}|W,X=x)\\&=H(Y_1^{t\floor{\frac{N}{t}}}|X=x)
	\end{align*}
	where in (a) we used decodability condition $$H(Y_1^{t\floor{\frac{N}{t}}}|C_{1}^{\floor{\frac{N}{t}}},Z_{1}^{t},W)=0,$$ and (b) follows by independence of $W$ and $(X,Y)$. By taking maximum over $x$ and $t$ we obtain the result. 
	\subsection*{Proof of Theorem~\ref{th12}:}
	In the placement phase, we use the same scheme as discussed before. In the delivery phase, we use the following strategy.
	Similar to \cite{amircache}, we use two-part code construction to achieve the upper bounds. As shown in Fig. \ref{achieve2}, we first encode the private data $X$ using one-time pad coding \cite[Lemma~1]{kostala2}, which uses $\ceil{\log(|\mathcal{X}|)}$ bits. 
	Next, we produce $U$ based on greedy entropy-based algorithm proposed in \cite{kocaoglu2017entropic} which solves the minimum entropy problem in \eqref{minent2}, where $\mathcal{C}'$, defined in \eqref{cache1}, is the response that the server sends over the shared link to satisfy the users$'$ demands \cite{maddah1}. 
	Thus, we have
	\begin{align}
	H(\mathcal{C}'|X,U)&=0,\label{kharkosde12}\\
	I(U;X)&=0,
	\end{align}  
	Note that in Remark 6 we substitute $\mathcal{G}_S$ by $U$ and for binary $X$ we have
	\begin{align}\label{kiun2}
	H(U) \leq H(Q^*)+\frac{\log e}{e},
	\end{align}
	and for $|X|>2$,
	\begin{align}\label{kiun3}
	H(U) \leq H(Q^*)+\frac{1+\log e}{2}.
	\end{align}
	
	Thus, we obtain \eqref{kiun} and \eqref{kiun1}.
	Moreover, for the leakage constraint we note that the randomness of one-time-pad coding is independent of $X$ and the output of the greedy entropy-based algorithm $U$.
	As shown in Fig. \ref{decode2}, at user side, each user, e.g., user $i$, first decodes $X$ using one-time-pad decoder. Then, based on \eqref{kharkosde12} it decodes $\mathcal{C}'$ using $U$ and $X$. Finally, it decodes $Y_{d_i}$ using local cache $Z_i$ and the response $\mathcal{C}'$.
	\subsection*{Proof of Theorem~\ref{loo}:}
	To achieve \eqref{log}, we use the solution to $h_0(P_{X\mathcal{C}'})=g_0(P_{X\mathcal{C}'})$ instead of the greedy entropy-based algorithm. 
	Furthermore, to achieve \eqref{ass} and \eqref{mass}, we use two-part construction coding and inequalities obtained in \cite[Theorem 1]{zero}. Upper bounds \eqref{koonwi11} and \eqref{koonwi22} are obtained in Theorem \ref{th1}. Finally, to achieve \eqref{kos3}, let the shared key $W$ be independent of $(X,Y)$ and has uniform distribution $\{1,,\ldots,T\}=\{1,\ldots,|\mathcal{C}'|\}$. We construct $\tilde{C}$ using one-time pad coding. We have
	\begin{align*}
	\tilde{C}=\mathcal{C}'+W\ \text{mod}\ |\mathcal{Y}|,
	\end{align*}
	where $\mathcal{C}'$ is defined in \eqref{cache1} and clearly we have
	\begin{align}
	I(\tilde{C};X)=0.
	\end{align}
	Then, $\tilde{C}$ is encoded using any lossless code which uses at most $\ceil{\log(|\mathcal{C}|)}$ bits. At decoder side, we first decode $\mathcal{C}'$ using the shared key. We then decode each demanded file by using the cache contents and $\mathcal{C}'$.
	 \clearpage
	\bibliographystyle{IEEEtran}
	{\balance \bibliography{IEEEabrv,IZS}}
\end{document}